\RequirePackage{etex}
\documentclass[5p,times]{elsarticle}

\usepackage[utf8]{inputenc}
\usepackage{xspace}
\usepackage{lineno,hyperref}
\modulolinenumbers[5]
\usepackage{hyperref}
\hypersetup{
  pdftitle={PRoVeFL: Private Robust and Verifiable Aggregation in Federated Learning},
  pdfauthor={Harsh Kasyap, Anil Kumar Pradhan, Ugur Ilker Atmaca, Graham Cormode, Carsten Maple}}
\usepackage{comment}
\usepackage{tikz}
\usepackage{tikz-network}
\usetikzlibrary{arrows.meta}
\usetikzlibrary{positioning,fit,backgrounds}
\usetikzlibrary{decorations.markings}
\usepackage{pgfplots}
\usepackage{pgfplotstable}
\pgfplotsset{compat=1.3}
\usepackage{multirow}
\usepackage{xcolor}
\usepackage{caption}
\usepackage{float}
\usepackage{subcaption}
\usepackage[section]{placeins}
\usepackage{graphicx}
\usepackage[ruled,vlined]{algorithm2e}
\usepackage{amsmath,amsfonts,amsthm}
\usepackage{algpseudocode}
\usepackage[normalem]{ulem}
\usepackage{enumitem}
\usepackage{pifont}
\newcommand{\cmark}{\ding{51}}%
\newcommand{\xmark}{\ding{55}}%

\newcommand{\sysname}{\textsc{PRoVeFL}\xspace}

\usepackage[skins]{tcolorbox}
\newtcolorbox{mybox}[2][]{%
  attach boxed title to top center
               = {yshift=-11pt},
  colframe     =black,
  colbacktitle = black,
  title        = #2,#1,
  enhanced,
}

\usepackage{array}
\newcolumntype{L}[1]{>{\raggedright\let\newline\\\arraybackslash\hspace{0pt}}m{#1}}
\newcolumntype{C}[1]{>{\centering\let\newline\\\arraybackslash\hspace{0pt}}m{#1}}
\newcolumntype{R}[1]{>{\raggedleft\let\newline\\\arraybackslash\hspace{0pt}}m{#1}}

\newtheorem{theorem}{Theorem}

\makeatletter
\renewcommand\paragraph{\@startsection{paragraph}{4}{\z@}%
  {1.25ex \@plus 1ex \@minus .2ex}%
  {-1em}%
  {\normalfont\normalsize\itshape}}
\makeatother

\DeclareMathOperator{\Encop}{Enc}
\newcommand{\Enc}[1]{\Encop_{\mathsf{pk}}\left(#1\right)}

\journal{arXiv}

\begin{document}
\begin{frontmatter}

%
% paper title
% Titles are generally capitalized except for words such as a, an, and, as,
% at, but, by, for, in, nor, of, on, or, the, to and up, which are usually
% not capitalized unless they are the first or last word of the title.
% Linebreaks \\ can be used within to get better formatting as desired.
% Do not put math or special symbols in the title.
\title{\sysname: Private Robust and Verifiable Aggregation in Federated Learning}

% author names and affiliations
% use a multiple column layout for up to three different
% affiliations
\author[label1,label3]{Harsh Kasyap\corref{cor}}
\ead{hkasyap.cse@iitbhu.ac.in}

\author[label2]{Anil Kumar Pradhan\corref{cor}}
\ead{anil@vaulttree.com}

\author[label3,label5]{Ugur Ilker Atmaca\corref{cor}}
\ead{ugur-ilker.atmaca@warwick.ac.uk}

\author[label4]{Graham Cormode}
\ead{graham.cormode@cs.ox.ac.uk}

\author[label3]{Carsten Maple}
\ead{cm@warwick.ac.uk}

\affiliation[label1]{organization={Indian Institute of Technology (BHU)}, city={Varanasi}, country={India}}
\affiliation[label2]{organization={Vaulttree}, city={}, country={Ireland}}
\affiliation[label3]{organization={University of Warwick}, city={Coventry}, country={UK}}
\affiliation[label4]{organization={University of Oxford}, city={Oxford}, country={UK}}
\affiliation[label5]{organization={Abdullah Gul University}, city={Kayseri}, country={Turkiye}}

\cortext[cor]{Corresponding author. First three authors have equal contribution.}
%\fntext[eqc]{First three authors have equal contribution.}

% conference papers do not typically use \thanks and this command
% is locked out in conference mode. If really needed, such as for
% the acknowledgment of grants, issue a \IEEEoverridecommandlockouts
% after \documentclass

% for over three affiliations, or if they all won't fit within the width
% of the page (and note that there is less available width in this regard for
% compsoc conferences compared to traditional conferences), use this
% alternative format:
% 
%\author{\IEEEauthorblockN{Michael Shell\IEEEauthorrefmark{1},
%Homer Simpson\IEEEauthorrefmark{2},
%James Kirk\IEEEauthorrefmark{3}, 
%Montgomery Scott\IEEEauthorrefmark{3} and
%Eldon Tyrell\IEEEauthorrefmark{4}}
%\IEEEauthorblockA{\IEEEauthorrefmark{1}School of Electrical and Computer Engineering\\
%Georgia Institute of Technology,
%Atlanta, Georgia 30332--0250\\ Email: see http://www.michaelshell.org/contact.html}
%\IEEEauthorblockA{\IEEEauthorrefmark{2}Twentieth Century Fox, Springfield, USA\\
%Email: homer@thesimpsons.com}
%\IEEEauthorblockA{\IEEEauthorrefmark{3}Starfleet Academy, San Francisco, California 96678-2391\\
%Telephone: (800) 555--1212, Fax: (888) 555--1212}
%\IEEEauthorblockA{\IEEEauthorrefmark{4}Tyrell Inc., 123 Replicant Street, Los Angeles, California 90210--4321}}

% use for special paper notices
%\IEEEspecialpapernotice{(Invited Paper)}

% make the title area
%\maketitle

% As a general rule, do not put math, special symbols or citations
% in the abstract
\begin{abstract}
Federated Learning (FL) enables multiple clients to collaboratively train machine learning models while retaining data locality, thereby enhancing user privacy. However, traditional FL frameworks rely on a centralized aggregation server and assume honest-but-curious clients, making them susceptible to both server-side inference and client-side poisoning attacks. Although recent work has explored secure and Byzantine-resilient FL protocols, they face a fundamental trade-off among privacy, integrity, and verifiability, and incur substantial computational and communication overhead due to the heavy use of cryptographic primitives.

In this work, we propose \sysname--a novel, modular FL framework that is \underline{P}rivacy-preserving, Byzantine-\underline{Ro}bust, and ensures \underline{Ve}rifiable aggregation. \sysname employs multiple servers leveraging multi-key fully homomorphic encryption. Each client encrypts its local model updates and distributes encrypted shares to all servers. This design enables a hybrid computation model in which ciphertext operations are carefully offloaded to the plaintext domain under strict privacy constraints to efficiently evaluate complex statistical aggregation rules. \sysname is compatible with a wide range of state-of-the-art Byzantine-robust aggregation algorithms (e.g., Krum, Trimmed Mean, FLTrust, norm clipping, MESAS, and more) and further enhances them with verifiability mechanisms that require minimal trust in at least one honest server. We evaluate it across different settings and demonstrate its scalability with varying numbers of parameters and participants. \sysname improves runtime over the prior works, Prio and ELSA, based on distributed trust with comparable security guarantees, up to 100$\times$ and 10$\times$, respectively.
\end{abstract}

\begin{keyword}
Federated learning, Byzantine-robust aggregation, Multi-key homomorphic encryption, Verifiable aggregation, Privacy-preserving machine learning
\end{keyword}
\end{frontmatter}

% no keywords

% For peer review papers, you can put extra information on the cover
% page as needed:
% \ifCLASSOPTIONpeerreview
% \begin{center} \bfseries EDICS Category: 3-BBND \end{center}
% \fi
%
% For peerreview papers, this IEEEtran command inserts a page break and
% creates the second title. It will be ignored for other modes.
%\IEEEpeerreviewmaketitle

\section{Introduction}
%Financial fraud detection is an escalating challenge, with over \$3.1 trillion in illegitimate funds circulating globally in 2024 as a result of activities including fraud, money laundering, drug trafficking, and human exploitation \cite{nasdaq2024financialcrime}. Detecting and preventing these activities demands advanced machine learning models trained on large-scale, high-dimensional transaction data \cite{jiang2024spade+}. However, this data is often fragmented across institutions, each bound by strict regulatory constraints and business competition \cite{tian2025towards}. Furthermore, it is more challenging to detect novel fraudulent patterns. Federated Learning (FL) has emerged as a promising concept, enabling multiple financial institutions to collaboratively train shared models while maintaining the locality of raw data \cite{reddy2024deep, khan2024fed}. 

Federated Learning (FL) allows the development of a privacy-preserving distributed machine learning (ML) model for multiple clients to jointly train, without directly sharing their local data~\cite{mcmahan2016federated,mcmahan2017communication}. By decentralizing model training, FL offers significant advantages for sensitive applications such as healthcare and finance, where data is often fragmented between institutions, each bound by strict regulatory constraints and business competition \cite{tian2025towards}. In its standard design, a single server maintains a global ML model and incrementally updates it by aggregating local gradient updates sent by multiple clients. However, such a design is open to security and privacy threats~\cite{mothukuri2021survey}; in particular, an adversary can exploit model updates to \textit{infer} private information~\cite{nasr2019comprehensive,manna2022milsa} or \textit{manipulate} the aggregation process if the server is compromised~\cite{fu2022label}, and the global model can be \textit{poisoned} if clients are compromised~\cite{fang2020local,tolpegin2020data,kasyap2024sine}.

% \begin{figure}
%     \centering
%     \includegraphics[width=0.5\textwidth]{flissues.png}
%     \caption{Security issues in FL}
%     \label{fig:secissues}
% \end{figure}

To address such threats, three interrelated properties should be achieved together: privacy, Byzantine robustness, and verifiability \cite{rathee2023elsa}. %In a typical workflow, the clients locally compute their gradient updates, and the server maintains a global model by aggregating the gradient updates sent by the clients. However, 
Just hiding raw data is not sufficient for privacy.
As several studies~\cite{fu2022label,hu2021source} have confirmed, gradient updates can leak private information. Thus, other privacy-enhancing technologies (PETs), such as Secure Multi-Party Computation (SMPC), Fully Homomorphic Encryption (FHE), and Differential Privacy (DP), are integrated to enhance privacy.
While such protocols can hide clients' updates from the server, they can also make it harder to identify malicious updates. %Many studies addressed either clients' privacy, verifiability, or Byzantine robustness in isolation, but not all together. 
Recent work has proposed to combine secure aggregation with robustness measures, yet these often incur high computational and communication costs or rely on relaxed trust assumptions. 
They are not flexible enough to support a variety of Byzantine-robust aggregation rules, and instead enforce simple heuristics, such as via norm clipping~\cite{rathee2023elsa}. 
Other approaches, such as MUDGUARD~\cite{wang2024mudguard}, employ a privacy-preserving clustering method.
This can protect the global model against most malicious clients, but it significantly increases the complexity and cost of aggregation. 
Similarly, ELSA~\cite{rathee2023elsa}, ACORN~\cite{bell2023acorn}, and EIFFeL~\cite{roy2022eiffel} integrate zero-knowledge proofs or integrity checks into the secure aggregation to mitigate the impact of malicious clients by enforcing specified bounds or norms on client updates. However, they do not align with the current round distribution of local model updates; instead, they treat the previous global model update as the trusted reference.
%Moreover, they are limited to support operations such as norm clipping, and not able to integrate the state-of-the-art Byzantine-aggregation rules such as Krum~\cite{blanchard2017machine}, Trimmed-mean~\cite{yin2018byzantine}, FLTrust~\cite{cao2020fltrust} and others. %Also, they do not enable clients to verify the server's actions. %and often assume at least a semi-honest server in their threat models.
%Moreover, the ability to confirm that the central server aggregates updates correctly and without tampering further increases the privacy-robustness trade-off. Thus, achieving one of these properties may inadvertently decrease another.
It has also been shown that secure aggregations performed in vanilla FL settings are not secure, and can still leak sensitive and private information~\cite{gao2021secure}. 

The third critical property is the verifiability of the aggregation process. 
In standard FL, clients are often assumed to trust that the central server honestly performs the intended aggregation. However, a malicious or compromised server could deviate from the protocol by injecting maliciously crafted updates, omitting certain client contributions, or altering the aggregated result. Thus, it is important to verify that the server’s reported global model update truly corresponds to the \textit{correct aggregation, while correctly penalizing the malicious updates}. Protocols focusing on verifiable aggregation (\textit{e.g.,} VerifyNet~\cite{xu2019verifynet} or SVeriFL~\cite{gao2023sverifl}) enable clients to detect server misbehaviors by requiring the server to produce cryptographic proofs of aggregation. 
However, they assume that all client updates aggregated are legitimate. %Thus, there is an inherent trade-off for achieving privacy, Byzantine robustness and verifiability. 
It is a significant challenge to simultaneously achieve all three properties---client update privacy, Byzantine robustness, and server verifiability. 
Existing methods designed to achieve any of these obstruct or contradict the others.

%Figure~\ref{fig:secissues} highlights the tradeoff in order to achieve these three properties. It is evident that 
Fully Homomorphic Encryption (FHE)~\cite{gentry2009fully,cheon2017homomorphic} holds promise for achieving the key security properties by enabling computation directly over encrypted data and providing strong privacy guarantees without the need for trusted third parties. Furthermore, FHE can support Byzantine robustness and verifiability by enabling secure, auditable computation pipelines. Recent works such as FedML-HE~\cite{jin2023fedml}, xMK-CKKS~\cite{ma2022privacy}, and tMK-CKKS~\cite{du2023efficient} have demonstrated these capabilities to some extent. However, these systems still suffer from significant computational overhead and are not ready for adoption.
In particular, the non-linear, comparison-based logic used in Byzantine-resilient rules remains expensive to implement. Ensuring verifiability within a purely FHE-based framework often requires expensive zero-knowledge proofs or complex cryptographic commitments, further increasing costs. These limitations hinder scalability, make the protocols impractical for real-time systems, and restrict their applicability.
%\smallskip

\subsection{Our Contribution.} 
This paper proposes a generic \underline{P}rivate, \underline{Ro}bust, and \underline{Ve}rifiable FL (\sysname) framework to address the above challenges and ensure clients' privacy, Byzantine-robustness, and aggregation verifiability simultaneously. 
%We propose a novel, generic federated learning (FL) framework that introduces a 
\sysname is scalable with a modular multi-server architecture to address the computational and communication inefficiencies inherent in secure aggregation protocols. Unlike a conventional single-server FL system, \sysname partitions the model parameters (1) to distribute the computational load and (2) to enable offloading portions of the ciphertext-domain computation to the plaintext domain, without compromising privacy. This partitioning enables threshold decryption on a small number of servers, which would be infeasible with a large number of clients. By aligning the partitioning with CKKS-specific parameters such as the polynomial modulus degree, we maximize ciphertext packing efficiency, reduce the need for zero-padding, and thereby minimize computational and communication costs. %Assuming at least one honest server, which is essential to maintain correctness and security under our protocol.

\sysname ensures that client updates remain private from the servers unless all of them collude. Using multi-key homomorphic encryption, the protocol guarantees privacy as long as at least one server is honest, extending the threat model considered in ELSA \cite{rathee2023elsa}, and SafeFL \cite{gehlhar2023safefl}. It further extends the verification guarantees of MUDGUARD~\cite{wang2024mudguard}, and Franzese et al. \cite{franzese2023robust} by verifying the correctness of the aggregation and the server’s enforcement of the robustness rules. \sysname improves the scalability of the peer-to-peer approach used in Franzese et al.~\cite{franzese2023robust}, and its design ensures that security and performance scale with the number of servers without burdening clients with additional communication or computation. A detailed comparison of \sysname with existing FL schemes across key security properties is presented in Table~\ref{tab:study_comparison}.

\sysname is compatible with state-of-the-art Byzantine resilient aggregation rules.
We demonstrate this by implementing Krum~\cite{blanchard2017machine}, Trimmed Mean~\cite{yin2018byzantine}, FLTrust~\cite{cao2020fltrust} and MESAS~\cite{krauss2023mesas}.
This is a significant improvement over existing approaches that rely solely on relaxed rules, such as norm clipping. We also carefully choose these aggregation rules to cover complex operations such as (coordinate-wise) sorting, pairwise comparison, and cosine similarity. 
A central technical idea underlying \sysname is the use of a jointly generated random multiplicative mask \textit{r}, which enables a carefully controlled transition from encrypted to plaintext computation without compromising privacy. Specifically, each client encrypts its update under the group public key and homomorphically applies the encrypted mask $\text{Enc}_{pk}(r)$ to its local update shares $u_{ik}$, yielding $\text{Enc}_{pk}(r \cdot u_{ik})$. The servers then perform the necessary statistical operations to obtain $\text{Enc}_{pk}(r \cdot s_{ik})$, and then collaboratively decrypt the results. Crucially, decryption reveals only the masked intermediate values. This allows the computationally expensive selection and filtering steps to be completed in plaintext, at a fraction of the cost of performing them homomorphically.
%A central innovation in our design lies in the carefully structured transformation of critical aggregation steps from the encrypted to the plaintext domain, orchestrated in a manner that preserves both the additional computational requirements and end-to-end privacy guarantees. This selective offloading not only reduces the cost of inherently expensive ciphertext operations but also enables the integration of an efficient verifiability mechanism, provided that at least one server is honest. 
In summary, our main contributions are as follows:
\begin{itemize}
    \item \textbf{Privacy and Robustness:} \sysname is a generic and flexible multi-server FL framework that can integrate state-of-the-art Byzantine-robust aggregation schemes, while preserving the privacy of clients’ updates, utilizing multi-key homomorphic encryption. 

    \item \textbf{Efficiency:} \sysname improves efficiency and practicality by incorporating partial plaintext computation in the multi-server setting (by obfuscating ciphertexts to preserve order and privacy) and significantly reducing computation and communication costs to achieve scalability.

    \item \textbf{Verifiability:} \sysname provides verifiability for the entire aggregation process, including the guarantee of correct execution for Byzantine-robustness computations, with a minimal trust requirement of at least one honest server.

    \item \textbf{Evaluation:} We evaluate \sysname in different settings, with varying numbers of parameters, participants, and servers, while using state-of-the-art Byzantine-robust aggregation rules such as $L_2$ norm, Krum, Trimmed-Mean, and FLTrust. The results demonstrate the practicality and efficiency of \sysname.

    %\item %It propose a generic framework which allows to integrate Byzantine-robustness frameworks with preserving clients' privacy. (See elsa limitation)
    %\item %It does not only provide verification only for aggregation, but it provide verification for the Byzantine-robustness calculations has been done correctly with the required for only 1 honest server. (e.g., outlier filtering and coordinate-wise trimming) 
    %\item %The proposed scheme split the vector into parts between the server, that the increasing number of server helps to fasten the computation. And to ease the computation of cost HE, the proposed scheme converts the ciphertext to plaintext for computation without theoretically proven privacy guarantee.
\end{itemize}

\begin{table}[tb]
    \centering
    \caption{Qualitative comparison of related FL frameworks.}
    \resizebox{0.5\textwidth}{!}{%
    \begin{tabular}{l|c|c|c|c|c|c}
    \hline
    \textbf{Study} & \textbf{No. Serv.} & \multicolumn{3}{c|}{\textbf{Properties}} & \multicolumn{2}{c}{\textbf{Threat Model}}\\
    \cline{3-7}
    & & Priv. & Robust. & Verif. & Mal. Serv. & Mal. Client\\
    \hline
    SeaFlame \cite{tang2025seaflame} & 2  & \cmark  & \cmark & \xmark  & \cmark & \cmark \\
    \hline
    MUDGUARD \cite{wang2024mudguard} & $\ge 2$ &\cmark &\cmark &\cmark &\cmark  &\cmark \\
    \hline    
    SafeFL \cite{gehlhar2023safefl} & $\ge 2$ & \cmark & \cmark & \xmark & \cmark & \cmark \\
    \hline
    TAPFed \cite{xu2024tapfed} & $>2$ & \cmark & \xmark & \cmark & \cmark & \xmark \\
    \hline
    DefendFL \cite{liu2024defendfl} & 1 & \cmark &\cmark & \xmark & \xmark & \cmark \\
    %\hline
    %Hao et al. \cite{hao2023robust} &  &  &  &  &  &  \\
    \hline
    Franzese et al. \cite{franzese2023robust} & -- & \cmark & \cmark & \cmark & \cmark & \cmark \\
    \hline
    ELSA \cite{rathee2023elsa} &2 &\cmark  &\cmark &\xmark &\cmark &\cmark \\
    \hline
    Ma et al. \cite{ma2024trusted}& 2 & \cmark  & \xmark  &\cmark  &\cmark  &\cmark \\
    \hline
    ACORN \cite{bell2023acorn}&1  & \cmark &\cmark  & \xmark  &\cmark &\cmark \\
    \hline
    Tang et al. \cite{tang2024flexible}& 2 &\cmark  &\cmark & \xmark &\cmark &\cmark \\
    \hline
    Eiffel\cite{roy2022eiffel} &1 &\cmark &\cmark & \xmark & \xmark & \cmark\\
    \hline
    RoFL \cite{burkhalter2021rofl} &1& \cmark& \cmark & \xmark & \xmark &\cmark\\ 
    \hline
    Prio\cite{corrigan2017prio} &$\ge 2$ &\cmark & \cmark & \xmark & \xmark &\cmark \\
    \hline
    Prio+\cite{addanki2022prio+} &$\ge 2$ & \cmark &\cmark & \xmark & \xmark & \cmark \\
    \hline
    PVFL~\cite{yin2024pvfl} &$1$  &\cmark  &\xmark  & \cmark & \xmark &\xmark\\
    \hline
    VPPFL~\cite{huang2024vppfl} &$2$  &\cmark  &\cmark  & \cmark & \xmark &\cmark\\
    \hline
    \textbf{\sysname} &$\ge 2$  &\cmark  &\cmark\cmark$*$  & \cmark & \cmark &\cmark\\
    \hline
    \end{tabular}}
    \vspace{1mm}
    {\footnotesize $*$ \cmark\cmark refers to wide coverage of state-of-the-art robustness measures. }
    \label{tab:study_comparison}
\end{table}

\section{Background and Related Work}
In this section, we discuss the background of federated learning (FL) and its susceptibility to poisoning, inference, and aggregation attacks. We also discuss the security building blocks in our framework for integrating with FL.

\subsection{Federated Learning}
Federated Learning (FL) is a distributed machine learning paradigm designed to collaboratively train a shared global model across multiple clients (\textit{e.g.}, big institutions such as healthcare or financial organizations, mobile devices or sensors) without transferring their local datasets to a central server~\cite{mcmahan2016federated}. 
This framework is particularly useful as a first line of defense in privacy-sensitive applications, as it keeps raw data on-premises and only allows for the movement of the locally trained model.

Formally, consider $n$ clients, each holding a private dataset $D_i$. The goal is to train a global model $u \in \mathbb{R}^d$ that minimizes a global loss function expressed as a weighted average of local empirical losses:

\[
\min_{u} F(u) = \sum_{i=1}^n w_i F_i(u),
\]
where $F_i(u) = \frac{1}{|D_i|} \sum_{(x, y) \in D_i} \ell(u; x, y)$ is the local loss function of client $i$ based on its dataset $D_i$, and $\ell$ is a suitable loss function (\textit{e.g.}, cross-entropy or mean squared error). The weight $w_i = \frac{|D_i|}{\sum_{j=1}^n |D_j|}$ represents the proportion of data owned by the client $i$.

The Federated Averaging (FedAvg) algorithm~\cite{mcmahan2017communication} is the most widely used approach in FL. In each communication round: (1) the server broadcasts the current global model $u^t$ to a subset of clients; (2) each selected client performs local training, typically several steps of gradient descent: \[u_{i}^{t+1} = u^t - \eta \nabla F_i(u^t),\] where $\eta$ is the learning rate; (3) clients send their locally updated models $u_{i}^{t+1}$ to the server; (4) the server aggregates them to update the global model: \[u^{t+1} = \sum_{i=1}^n w_i u_{i}^{t+1}.\] This process iterates until convergence.

\subsubsection{Security Threats in FL}
Despite its privacy-conscious design, FL is vulnerable to several security threats that can compromise data confidentiality, model integrity, and overall utility. These threats are broadly categorized into poisoning, inference, and aggregation attacks.

%\smallskip
\paragraph{Poisoning Attacks} These attacks aim to corrupt the training process by injecting malicious behavior through either data or model updates. In data poisoning, the attacker manipulates their local dataset $D_i$ (e.g., by mislabeling or crafting samples) to influence the model's behavior in targeted or untargeted ways. In model poisoning, rather than altering data, the attacker directly crafts the local update $u^{(i)}$ to introduce backdoors or cause a targeted degradation of the global model. Some of the popular state-of-the-art poisoning attacks are label poisoning attack~\cite{10.5555/3327345.3327509}, backdoor attack~\cite{bagdasaryan2020backdoor}, LIE attack~\cite{baruch2019little}, Trim Attack~\cite{fang2020local} and Shejwalkar attack~\cite{shejwalkar2021manipulating}.
%\smallskip

\paragraph{Inference Attacks} These attacks exploit the visibility of intermediate or final model updates to extract private information. Membership Inference Attacks~\cite{nasr2019comprehensive} determine whether a specific data point was included in a client's training data. Property Inference Attacks~\cite{ganju2018property} reveal sensitive attributes or patterns from a client’s data (\textit{e.g.}, presence of a disease). Gradient Inversion Attacks~\cite{fredrikson2015model} reconstruct raw training data from shared gradients, especially in settings with few clients or sparse updates.

%\smallskip
\paragraph{Aggregation Attacks} 
These attacks target the aggregation mechanism itself, leading to unreliable or biased global models~\cite{guo2020v,wang2024breaking}. In traditional federated learning, the server is assumed to be honest but curious; however, this assumption may not always hold in practice. If the central server is malicious or compromised, it can deliberately perform an incorrect aggregation of client updates, thereby compromising the integrity and utility of the global model. A dishonest server may alter the aggregation logic, such as assigning skewed weights to specific clients, selectively dropping updates from certain clients (e.g., honest ones), or altering them before aggregation, which can lead to bias, reduced performance, or intentional misbehavior of the global model. A compromised server may collude with a subset of adversarial clients to amplify their impact by preferentially accepting their manipulated updates.

%\smallskip
\medskip
These vulnerabilities necessitate trustworthy Byzantine resilient aggregation in privacy-preserving settings. \textit{Ensuring privacy, integrity and trust in FL is crucial for its deployment in real-world applications.}

\subsubsection{Byzantine-Robust Aggregation Schemes in FL}\label{subsec-byz-rob} 
Byzantine-robust aggregation schemes are designed to protect the global model from malicious client updates. These schemes apply statistical criteria to filter, select, or reweight updates before aggregation. Below, we describe three representative schemes that differ in their approach: Krum selects a single most-consistent update, Trimmed Mean discards coordinate-wise extremes, and FLTrust weights updates by their alignment with a trusted reference. We later instantiate each of these within \sysname (Section~\ref{sec-proposed}).
%\smallskip

\paragraph{Krum} Krum~\cite{blanchard2017machine} is a Byzantine-robust aggregation rule that selects the client update most consistent with the majority of updates. This method is effective against up to $f < \frac{n}{2}$ Byzantine clients among $n$ total clients.
\smallskip

\noindent\textbf{Input:} A set of $n$ client updates $\{u_1, u_2, \dots, u_n\} \subset \mathbb{R}^d$ and a known upper bound $f$ on the number of Byzantine clients.

\noindent\textbf{Operations:} Krum performs the following operations.
\begin{enumerate}%[label=\textbf{Step \arabic*.}, leftmargin=*]
    \item \textbf{Pairwise distances:} For each client $i \in \{1, \dots, n\}$, compute the squared Euclidean distance to all other updates:
    \[
    d_{ij} = \|u_i - u_j\|^2, \quad \forall j \ne i.
    \]

    \item \textbf{Score computation:} For each $i$, sort the distances $d_{ij}$ in ascending order and take the $n - f - 2$ least values. 
    The \emph{Krum score} of $u_i$ is the sum of these distances:
    \[
    \mathrm{score}(u_i) = \sum_{j \in \mathcal{N}_i} d_{ij}, \quad \text{where } \mathcal{N}_i \subset \{1,\dots,n\} \setminus \{i\}, \,
    \]
    where $|\mathcal{N}_i| = n - f - 2.$
\end{enumerate}

\noindent
\textbf{Selection:} Return $u^{t+1}$, the update with smallest score.
\smallskip

\paragraph{Trimmed Mean} Trimmed Mean~\cite{yin2018byzantine} is a robust coordinate aggregation rule designed to tolerate up to $f$ Byzantine clients. It operates by discarding the extreme values in each coordinate across all client updates.

\smallskip
\noindent\textbf{Input:} A set of $n$ client updates $\{u_1, u_2, \dots, u_n\} \subset \mathbb{R}^d$ and a known upper bound $f$ on the number of Byzantine clients.

\noindent\textbf{Operations:} Trimmed-Mean does the following operations.
\begin{enumerate}
    \item \textbf{Coordinate-wise sorting:} For each coordinate $j \in \{1, \dots, d\}$, collect the $j$-th coordinate from all updates:
    \[
    V_j = \{ u_1[j], u_2[j], \dots, u_n[j] \}.
    \]
    Sort $V_j$ in ascending order.

    \item \textbf{Trim extremes:} Remove the largest $f$ and smallest $f$ values from $V_j$, for $f < n/2$.    
\end{enumerate}

\noindent\textbf{Selection:} Compute the mean of the remaining values:\[
    u^{t+1}[j] = \frac{1}{n - 2f} \sum_{v \in V_j^{\text{trimmed}}} v.
    \]
This is repeated independently for each coordinate $j$, yielding the aggregated update. %Trimmed-Mean is simple and effective.
\smallskip

\paragraph{FLTrust} FLTrust~\cite{cao2020fltrust} is a Byzantine-robust aggregation framework that assumes servers have access to a small, trusted dataset. It uses this dataset to both verify client updates and guide aggregation through a trust-weighted averaging scheme.
\smallskip

\noindent\textbf{Input:}
\begin{itemize}
    \item A set of client updates $\{u_1, u_2, \dots, u_n\} \subset \mathbb{R}^d$,
    \item A trusted update $u_0$ computed locally by the server on a small clean dataset.
\end{itemize}

\noindent\textbf{Operations:} FLTrust does the following operations.
\begin{enumerate}
    %\item \textbf{Normalize updates:} Normalize each client update and the server update to unit norm:
    %\[ \hat{u}_i = \frac{u_i}{\|u_i\|}, \quad \hat{u}_0 = \frac{u_0}{\|u_0\|}.\]

    \item \textbf{Compute trust score:} For each client $i$, compute trust score as the ReLU of cosine similarity between each local model update with the trusted update:
    \[
    w_i = \max\left\{0, \cos(u_i, u_0) \right\}.
    \]

    \item \textbf{Rescale updates:} Rescale each normalized client update by its norm and trust score:
    \[
    u_i^{\text{scaled}} = w_i \cdot \frac{\|u_0\|}{\|u_i\|} \cdot u_i.
    \]
\end{enumerate}

\noindent\textbf{Aggregation:} Aggregate the rescaled updates using a normalized trust-weighted average:
    \[
    u^{t+1} = \frac{\sum_{i=1}^{n} u_i^{\text{scaled}}}{\sum_{i=1}^{n} w_i}.
    \]

%FLTrust effectively filters out untrustworthy updates and anchors training to the trusted root update $u_0$.
\medskip

%Similar to previous algorithms, servers perform encrypted operations over private input from clients and public input of their own. Then, after obfuscation and getting plaintext shares, they compute the trust scores and rescale updates. Finally, the global model for the next iteration is selected using weighted averaging. %We provide a detailed example for this in Appendix~\ref{}.

\subsubsection{Related Work on FL Security}
%\subsection{Notes (to be deleted)}
%Refer these papers to design tables:
%\begin{enumerate}
%    \item SeaFlame, MudGuard, SafeFL, TAPFed, DefendFL
%    \item Robust \& Secure FL against hybrid attacks (TIFS)
%    \item Privacy Preserving \& Secure and robust FL (Wiley)
%    \item Robust \& Actively Secure Generative Collaborative learning (NIPS)
%    \item Trusted model aggregation with ZKP in FL (TPDS)
%    \item Byzantine Resilient Secure FL on Low Bandwidth NW (IEEE Access)
%    \item Decoding FL defenses: Systematization, Pitfalls \& Remedies (Arxiv)
%\end{enumerate}

%\begin{enumerate}
%    \item ACORN
%    \item A Feasible \& Scalable Malicious Secure Aggregation protocol for FL (TIFS)
%\end{enumerate}

%\begin{itemize}
%    \item SeaFlame: It uses secure aggregation protocol against malicious participants. It claims it is a communication-efficient protocol. It uses two non-colluding servers. 
%\end{itemize}

%FL has emerged as a foundational framework for privacy-preserving machine learning by enabling model training across distributed clients without the need to exchange raw data. Despite its advantages, FL protocols still have significant challenges: protecting the privacy of client data from leakage through local model updates, maintaining robustness of the global model in the presence of Byzantine (i.e., malicious or faulty) clients, and ensuring the integrity of computations performed by the central server. In this section, we provide a review of recent schemes that have been proposed to address these challenges.

Secure and Robust Federated Learning (SRFL) under a dual-server architecture aims to achieve privacy and robustness by separating aggregation and validation roles across two non colluding servers~\cite{liu2025secure}. SRFL employs CKKS encryption, random perturbation, and encrypted similarity-based verification models to filter poisoned updates. However, it is not extensible to other Byzantine-robust aggregation algorithms, nor does it provide cryptographic guarantees for the correctness of server-side aggregation. 
Using $\ell_2$ norm-bounding, ELSA proposes a safe aggregation method based on a dual-server model to guarantee privacy and strong gradient filtering to achieve Byzantine robustness \cite{gehlhar2023safefl}. SeaFlame builds upon ELSA by reducing communication costs while maintaining privacy and robustness against gradient boosting \cite{tang2025seaflame}. 
TAPFed employs threshold cryptography for resilience to dropouts and also serves as built-in verification, since it can tolerate some aggregators being malicious; as long as the number of honest parties exceeds a given threshold, the correct global model can be reconstructed by the clients \cite{xu2024tapfed}. However, it does not provide a Byzantine-robust aggregation for client updates.  RoFL \cite{burkhalter2021rofl}, Prio~\cite{corrigan2017prio}, and Prio+~\cite{addanki2022prio+} ensure Byzantine robustness through lightweight aggregation proofs but are not built for security against adversarial servers. SMPC-based secure aggregation is incorporated into SafeFL to ensure secure aggregation, but instead of providing server computations, it provides a guarantee that a single malicious server cannot corrupt the computation if there is at least one honest server \cite{gehlhar2023safefl}. 

MUDGUARD \cite{wang2024mudguard} and Franzese et al. \cite{franzese2023robust} represent state-of-the-art frameworks to holistically address the key challenges mentioned above. MUDGUARD guarantees these properties for the malicious minority on servers and for the malicious majority on clients. However, it lacks support for verifying client-side robustness enforcement on the server. 
\sysname improves on MUDGUARD by introducing a verifiability mechanism that enables clients to verify the correctness of the server's Byzantine-robust enforcement of the aggregated model. Franzese et al. proposed a fully decentralised framework that is based on SMPC, which is a flexible and theoretically sound approach, but it has limited practical deployment for large-scale FL tasks.

The proposed protocol \sysname addresses these shortcomings by introducing a privacy-preserving, verifiable, and secure aggregation framework. The system operates under a distributed-trust model with multiple servers, assuming that only one must be honest, and is designed to scale efficiently as the number of servers increases. Table~\ref{tab:study_comparison} presents a comparative analysis of key FL frameworks across privacy, robustness, verifiability, and threat model. While most prior work satisfies one or two of these properties, only \sysname offers comprehensive coverage. It is the only framework that guarantees client update privacy, Byzantine-robust aggregation, and full server-side verifiability with a minimal trust assumption and support for modern robustness measures.

\subsection{Multi-Key Fully Homomorphic Encryption}\label{sec:mkfhe}
Standard single-key Fully Homomorphic Encryption (FHE) suffices when all data is encrypted under a single public key by a single client. However, this is not sufficient for federated aggregation, which involves \emph{independent} clients.
Multi-Key FHE (MK-FHE) addresses this point by deriving a joint public key and joint relinearisation key from individual key shares, enabling homomorphic evaluation across ciphertexts encrypted by multiple parties while ensuring that no single party ever learns the collective secret~\cite{ma2022privacy,du2023efficient}. The construction below follows the Ring-LWE instantiation used for our protocol.

Let \(N\) be a power-of-two cyclotomic dimension with
\(q\in\mathbb Z\) the ciphertext modulus, and \(R_q=\mathbb Z_q[x]/(x^{N}+1)\).
Assume \(n\) parties (aggregation servers in our protocol) are 
\(\mathcal S_1,\dots,\mathcal S_n\).

\noindent(1) Individual secrets. Each party draws a small secret polynomial
\(s_k\overset{\$}{\leftarrow}\chi\subset R_q\)
and publishes a common random \(a\in R_q\).

\noindent(2) Collective public key. Each party samples an error \(e_k\leftarrow\chi\) and broadcasts
\(b_k:=-a s_k+e_k\).
The \emph{group public key} is
\[
\textsf{pk}= \bigl(b,\;a\bigr), \quad
b:=\sum_{k=1}^{n} b_k
      = -a\!\underbrace{\textstyle\sum_{k=1}^{n}s_k}_{S}+\;e,\;
      \;e=\sum_{k}e_k ,
\]
where the implicit \emph{group secret key}
\(S=\sum_{k}s_k\) is \emph{never} reconstructed.

\noindent(3) Collective relinearisation key.
\begin{itemize}
  \item Each server \(\mathcal S_k\) forms two plaintexts
        \(\alpha_k = s_k^{2}\) and \(\beta_k = 2s_k\),
        then encrypts them under the group public key:
        \[
          \textsf{ct}_{\alpha,k}= \Encop_{\textsf{pk}}(\alpha_k),\qquad
          \textsf{ct}_{\beta,k}= \Encop_{\textsf{pk}}(\beta_k).
        \]
  \item The ciphertexts are broadcast and homomorphically added to compute:
        \[
          \textsf{rlk}
              = \Encop_{\textsf{pk}}\!\bigl(S^{2}\bigr)
        \]
        where \(S=\sum_{k=1}^{n}s_k\) is the (implicit) group secret.
\end{itemize}

The value \(\textsf{rlk}\) is the \emph{group relinearisation key}, enabling efficient
ciphertext–ciphertext multiplication while revealing no individual secret key material.

\noindent(4) Encryption.
Anyone can encrypt \(m\in R_q\) via
\[
\Encop_{\textsf{pk}}(m):
\;u\leftarrow\chi,\;
e_1,e_2\leftarrow\chi,
\]
\[
(c_0,c_1)=(b u+e_1+m,\;a u+e_2).
\]

\noindent(5) Homomorphic operations. Addition is component-wise. Multiplication uses \textsf{rlk}:
\((c_0,c_1)\!\ast\!(d_0,d_1)\xrightarrow{\textsf{rlk}}
(\tilde c_0,\tilde c_1)\).

\noindent(6) Threshold decryption.
Each server outputs a \emph{partial decryption}
\(d_k=s_k c_1\).
Aggregating yields
\[
m = c_0 + \sum_{k=1}^{n} d_k \pmod q,
\]
so \emph{all} servers must cooperate; a single honest party
can block decryption and thus prevents data leakage to colluding
adversaries.

\smallskip
\noindent\textbf{Security Building Blocks.}
The confidentiality and correctness of homomorphic evaluation rely on the \emph{decisional Ring–LWE} assumption, the standard hardness foundation for lattice-based fully homomorphic encryption (BGV/CKKS).  

\begin{comment}
    \subsubsection{Hardness Assumptions and Encryption Schemes}

Our security claims rest on two well‑studied computational assumptions: the \emph{Decisional Ring‑Learning‑With‑Errors (Ring‑LWE)} problem underlying lattice‑based FHE, and the classical \emph{Discrete Logarithm Problem (DLP)} underpinning the per‑coordinate commitments.
\end{comment}

\subsubsection{The Decisional Ring‑LWE Problem.}
Let $N$ be a power of two; $q$ a prime with $q\equiv1\pmod{2N}$; and $R_q=\mathbb{Z}_q[x]/(x^N+1)$ the degree‑$N$ cyclotomic ring. Fix a noise distribution $\chi$ over $R_q$ (\textit{e.g.}, a centered binomial or discrete Gaussian) with small standard deviation. For a secret $s\overset{\$}{\leftarrow}\chi$ and independently sampled $a\overset{\$}{\leftarrow}R_q$, $e\overset{\$}{\leftarrow}\chi$, an \emph{RLWE sample} is the pair
\[
(a,\;b = a\cdot s + e) \;\in\; R_q\times R_q.
\]
The \emph{decisional RLWE} problem asks to distinguish polynomially many independent RLWE samples from uniformly random pairs in $R_q\times R_q$. Formally, the advantage of a probabilistic polynomial‑time distinguisher $\mathcal{D}$ is,
\begin{align*}
\mathrm{Adv}_{\mathcal{D}}^{\mathrm{RLWE}} = {} &
\left| \Pr[\mathcal{D}(a,b)=1 \mid b = a \cdot s + e] \right. \\
& \qquad \left. - \Pr[\mathcal{D}(a,u)=1 \mid u \overset{\$}{\leftarrow} R_q] \right|
\end{align*}

The \emph{Decisional RLWE Assumption} states that for any PPT $\mathcal{D}$ this advantage is negligible in the security parameter.

\subsection{Verifiability of Arithmetic Computations} \label{subsec:verify}

\subsubsection{Commitment-based verification}
To enable public verification of basic arithmetic on committed values, each client publishes discrete logarithmic commitments to their data. Suppose a client sends plaintext values \(x, y \in \mathbb{Z}_p\) to the server, along with corresponding commitments: \[h_1 = g^x, \qquad h_2 = g^y,\] where \(g\) is a generator of a cyclic group \(\mathbb{G}\) of prime order \(p\). It satisfies the following properties:

\smallskip\noindent\textbf{Addition (\(a = x + y\)).}\;
  The server computes \(a = x + y\). Any verifier can confirm the result using the group law:
  \[
  g^a \stackrel{?}{=} h_1 \cdot h_2.
  \]

\smallskip\noindent
  \textbf{Multiplication (\(b = x \cdot y\)).}\;
  The server computes \(b = x \cdot y\). Verification is done via a bilinear pairing \(e : \mathbb{G} \times \mathbb{G} \to \mathbb{G}_T\):
  \[
  e(g, g)^b \stackrel{?}{=} e(h_1, h_2).
  \]

%These checks ensure arithmetic integrity without revealing the underlying plaintexts, assuming the hardness of the Discrete Logarithm Problem (DLP) and the security of the pairing group.

\smallskip
\noindent\textbf{Security Building Blocks.} Soundness of the server-side verification mechanism is guaranteed by the intractability of the \emph{Discrete Logarithm Problem} in a prime-order group, which underlies the exponentiation commitments attached to every model coordinate.
%By pairing RLWE-based encryption with DLP-based verifiability, the protocol simultaneously achieves post-quantum privacy and publicly verifiable integrity properties formalized in the subsections that follow.

\subsubsection{Discrete Logarithm Problem (DLP)}
For the commitments, $h_j=g^{u_{i,j}}$ works in a cyclic prime order group $(\mathbb{G},\cdot)$ with the generator $g$. Given $(g,\,g^x)$, an adversary’s task is to recover $x$. The \emph{Discrete Logarithm Problem} asserts that this inversion is infeasible for any PPT algorithm when the group order is sufficiently large (\textit{e.g.}, a 256-bit prime). The integrity of verification, therefore, reduces to DLP hardness.

In summary, the confidentiality of the ciphertexts is based on decisional RingLWE, while the extractibility of the commitments and the integrity of the server computations are based on DLP. These assumptions are standard and widely deployed, providing strong confidence in the cryptographic foundations of our framework.

\section{Problem Overview}
%In the case of only two clients, a single malicious client colluding with the malicious server can subtract the colluding client’s known update from the aggregated result or inject poisoned updates to degrade the global model.

We consider a federated learning (FL) framework comprising $n$ clients and $m$ servers, with $n \gg m \geq 2$. %Each client $i \in \{1, 2, \ldots, n\}$ holds a private dataset $\mathcal{D}_i$ and seeks to collaboratively train a global model without sharing its raw data. The learning process is orchestrated through multiple rounds, during which clients compute local model updates based on their data and send them to the servers in encrypted form.
%Their roles are defined as below.

\smallskip\noindent
\textbf{Client Role.} Each client performs local training on its dataset and generates a model update vector $u_i$. To preserve privacy, the client encrypts its update using a multi-key fully homomorphic encryption (MK-FHE) scheme (with decryption keys shared among servers), yielding a ciphertext $\Enc{u_i}$ that supports computations over it. %The client then splits the ciphertext into $m$ disjoint shares, one for each server, denoted as $\{\Enc{u_{i1}}, \Enc{u_{i2}}, \ldots, \Enc{u_{im}}\}$, and sends them to the respective servers.

\smallskip\noindent
\textbf{Server Role.} Each server $k \in \{1, 2, \ldots, m\}$ is responsible for performing secure aggregation over the encrypted shares. By leveraging homomorphic properties of MK-FHE, servers can jointly compute statistical functions (e.g., sum, average, norm) over ciphertexts without decrypting them. %The servers engage in interactive protocols to perform operations such as noise obfuscation and masking via encrypted random numbers, thereby randomizing the intermediate ciphertexts. Once the necessary encrypted computations are completed, the servers engage in threshold decryption to reveal masked plaintext results, which can then be used for further plaintext operations.

\begin{comment}
We define the problem setting consistent with standard assumptions in Federated Learning. The protocol involves two main roles:

\textbf{Participants (Clients):}
\begin{itemize}
  \item Each participant trains a local model on private data and generates an encrypted update.
  \item Participants transmit a subset of their encrypted updates and accompanying commitments to designated servers.
\end{itemize}

\textbf{Servers:}
\begin{itemize}
  \item Each server collects encrypted model updates from the participating clients.
  \item Servers perform aggregation operations over the received updates.
  %\item After aggregation, a server produces a cryptographic proof (via exponentiation commitment checks) to convince the other server of the correct computation.
  %\item Only after successful verification, the servers engage in collaborative multi-key FHE decryption.
\end{itemize}
\end{comment}

\subsection{Threat Assumptions} We consider a malicious threat model in which both servers and clients may adversarially deviate from the protocol to compromise privacy or correctness. %It is subject to the following consideration:
The system consists of multiple clients (\textit{e.g.}, healthcare and financial institutions) and two (or more) coordinating servers for managing the FL process. Each client maintains a private dataset and periodically submits local model updates. \textbf{Any subset of clients may also be malicious}, which can submit poisoned or arbitrarily crafted local updates to bias or degrade the global model or attempt to extract information about an honest client’s updates by colluding with the malicious server or by analysing the change in successive global models.

The protocol assumes that \textbf{at least one of the servers is honest}, while other servers may even collude and (1) attempt to learn individual client updates by deviating from the protocol or observing data flows, (2) manipulate the global aggregation process by selectively dropping or altering updates, or (3) send incorrect global models back to the clients, potentially disrupting model convergence. 

If a malicious server ignores all updates or selectively drops certain client updates, the protocol might fail to produce a global model. The protocol must detect such behaviour and prevent the malicious server from gaining information by observing which updates are included.
%\smallskip

%\begin{itemize}
 % \item We assume that at least one server is \emph{honest-but-curious}. This server follows the protocol faithfully but attempts to learn as much information as possible from its inputs.
 % \item Other servers may be \emph{malicious}, deviating arbitrarily from the protocol, including falsifying aggregation results or forging commitments.
 % \item The protocol guarantees that any deviation by a malicious server will be detected by the honest one through the verifiability mechanism.
%\end{itemize}

\subsubsection{Security Guarantees} Our proposed framework offers the following security guarantees:
\begin{enumerate}
    \item \textbf{Privacy of client updates:} Clients' local model updates remain protected throughout the training process. It ensures that even if a subset of clients or servers is compromised, the confidentiality of client data is preserved.
    
    \item \textbf{Integrity of client updates:} Ensuring that the system can detect and tolerate poisoned updates received from the malicious clients.
    
    \item \textbf{Integrity of Aggregation:} The aggregation process is verifiable and resistant to tampering, ensuring that the final model reflects genuine client contributions.
    
    \item \textbf{Scalability and Efficiency:} By distributing computational tasks and transitioning certain operations to the plaintext domain when secure to do so, the framework achieves practical efficiency suitable for large-scale deployments.
\end{enumerate}

\begin{comment}
\begin{itemize}
  \item \textbf{Confidentiality.} All linear operations occur on ciphertexts; masking by \(r\) hides magnitude information until the final division, which occurs post‑decryption.
  \item \textbf{Integrity \& verifiability.} Exponentiation commitments ensure every coordinate is well‑formed; breaking verification requires solving the discrete logarithm problem.
  \item \textbf{Robustness.} Cosine‑similarity weighting matches the original FL‑Trust, limiting the influence of poisoned updates.
  \item \textbf{Collaborative Decryption Security.} Since the system employs multi‑key FHE, decryption requires the cooperation of all servers. An honest server will refuse to complete the decryption phase unless it successfully verifies the computation. \textit{This ensures that malicious servers cannot unilaterally decrypt the model or infer any meaningful information from the ciphertexts}.
\end{itemize}

Collectively, the protocol achieves privacy, correctness, and robustness guarantees while remaining compatible with practical FHE frameworks that restrict operations to element‑wise addition and multiplication.
Under this model, correctness and integrity are ensured so long as one server remains honest. 
\end{comment}

\section{Our proposed framework: \protect\sysname}\label{sec-proposed}

This section presents our \textbf{P}rivate, \textbf{Ro}bust, and \textbf{Ve}rifiable \textbf{FL} (\sysname) framework. 
%Our framework facilitates the execution of Byzantine-robust aggregation schemes to ensure data integrity while maintaining confidentiality. We propose a multiserver architecture to address the scalability limitations of homomorphic encryption-based methods and to guarantee the integrity of the aggregated result, even if only one server is honest. 
Since each Byzantine-robust aggregation scheme makes different assumptions about malicious settings and employs unique approaches, we demonstrate \sysname using state-of-the-art poisoning-resistant algorithms such as Krum, Trimmed-Mean, FLTrust, and discuss its extendability to MESAS, CrowdGuard, and FLAME. We intentionally selected these algorithms because they cover various types of operations, including statistical operations, client-provided private input, and server-provided public input.
%But, rather than hard-binding our framework to these algorithms, we propose a modular design to cover a broad class of Byzantine robust schemes in a privacy-preserving and verifiable way. 

%in Figure~\ref{fig:modular-fhe-framework} (as described below).  
%\smallskip

\begin{figure}
    \centering
    \includegraphics[width=0.5\textwidth]{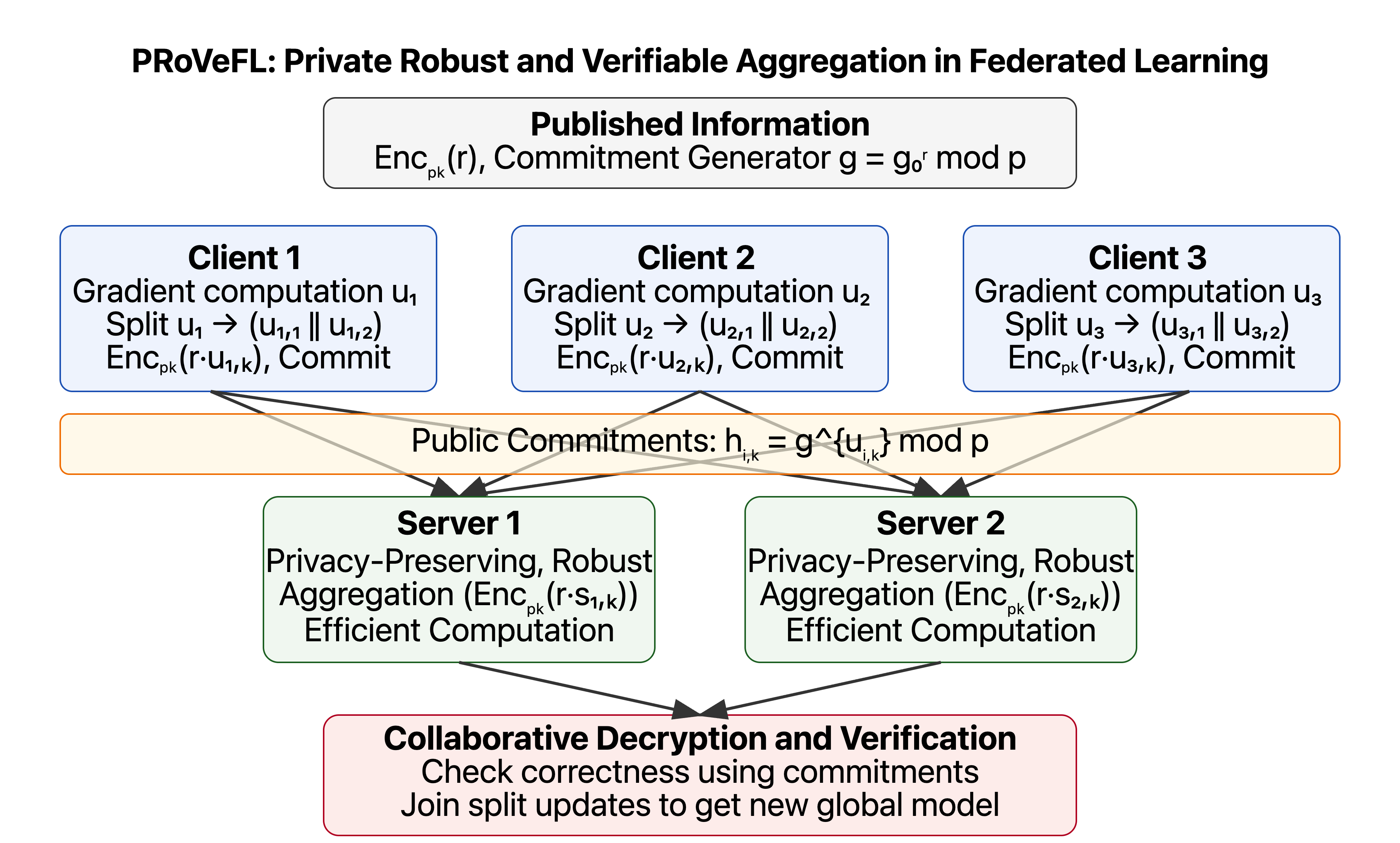}
    \caption{Our Proposed Framework}
    \label{fig:provefl_diag}
\end{figure}

\begin{comment}
The framework meets three simultaneous goals:

\begin{enumerate}
  \item \emph{privacy} - ensures that client updates remain encrypted at all times and also protects intermediate computations to not reveal any related sensitive information;
  \item \emph{robustness}—aggregation tolerates up to $f$ Byzantine
        clients under well-studied defences such as
        Krum~\cite{blanchard2017machine},
        Trimmed-Mean~\cite{yin2018byzantine}, and
        FLTrust~\cite{cao2020fltrust}; and
  \item \emph{verifiability}—any party can validate every
        server-side arithmetic operation by means of bilinear
        pairings and public commitments.
\end{enumerate}
\end{comment}

%\sysname is a modular, multi-server architecture that supports arbitrary Byzantine-robust aggregation rules under end-to-end multi-key homomorphic encryption (MK-FHE). To lift the scalability bottleneck of single-server FHE, \sysname distributes ciphertexts across two or more servers.
%Confidentiality holds as long as \emph{one} server is honest;integrity is enforced by aborting whenever a pairing check fails.Figure~\ref{fig:modular-fhe-framework} illustrates the protocol flow and shows how any poisoning-resistant scheme can be plugged into the same encrypted pipeline.

%\smallskip
%\noindent\textbf{Notation.}
\subsection{System Model and Notation}

%\subsubsection{Notation}
We use the following notation throughout this section. Let \( n \) denote the number of clients and \( f \) the upper bound on Byzantine (malicious) clients. Each client \( i \in \{1, \dots, n\} \) holds a local model update vector \( u_i \in \mathbb{R}^N \), computed as the gradient \( \nabla\mathcal{L}_i \) of a local loss function. Let $m$ denote the number of servers. The vector is deterministically split into $m$ equal-length parts: \( u_i = (u_{i,1} \,\|\, u_{i,2}\,\|\,\dots \,\|\,u_{i,m}) \), with \( u_{i,k} \in \mathbb{R}^{N/m} \) for \( k \in \{1,2, \dots , m\} \).  $\Encop_{\mathsf{pk}}(x)$ denotes the encryption of $x$ with the public key $pk$  where the public key is generated as described in Section~\ref{sec:mkfhe}.

Server-held ciphertexts are denoted \( c_{i,k} \), representing the \( k^\text{th} \) share of the encrypted update vector for client \( i \) held with Server-$k$. For schemes such as FLTrust, the trusted server model update is denoted \( u_0 \) with norm \( \|u_0\| \), and the trust score for client \( i \) is denoted \( w_i \). Final aggregation steps produce the next-round global model update \( u^{t+1} \). All cryptographic operations assume cyclic group \( \mathbb{G} \) of prime order \( p \), with associated bilinear map \( e: \mathbb{G} \times \mathbb{G} \rightarrow \mathbb{G}_T \).

Figure~\ref{fig:provefl_diag} demonstrates \sysname considering the example of two servers and three clients. 
Details of the framework are described below.

%\smallskip
%\noindent\textbf{Overview.} 
%\smallskip
\subsection{Protocol Workflow} 
\sysname proceeds in three stages: generation of published parameters, client-side and server-side operations. 
%The remainder of this section proceeds in three layers, each illustrated by a dedicated figure.
For ease of readability, we consider only two servers to describe the protocol, but it can be extended to any fixed number of servers. 
We first explain how the two servers jointly derive the global random mask and commitment generator; this protocol is summarized in Figure~\ref{fig:pub-params}.
We then present the \emph{generic} encrypted aggregation pipeline, covering client uploads, server-side homomorphic processing, collaborative decryption, public verification, and aggregation, as shown in Figure~\ref{fig:modular-fhe-framework}.
Finally, we instantiate the ``method-specific'' block of that pipeline with three state-of-the-art defenses: Krum (Figure~\ref{fig:modular-fhe-krum}), Trimmed-Mean (Figure~\ref{fig:modular-fhe-tm}), and FLTrust (Figure~\ref{fig:modular-fhe-fltrust}).
It demonstrates that a single set of cryptographic building blocks underlies all three Byzantine-robust aggregation schemes while accommodating their distinct statistical logics. 
%Now, we discuss each stage in \sysname in detail.

\begin{figure}[!htbp]
\setlength{\fboxsep}{.9pt}
\begin{center}
\begin{tcolorbox}[enhanced,
    drop fuzzy shadow southwest, colframe=black,colback=white]
\footnotesize
\noindent\textbf{Generation of Published Parameters}

\noindent\textbf{Setup:}
Fix a public generator \(g_0\in\mathbb G\) of prime order \(p\) and the joint FHE public key \(\mathsf{pk}\).

\vspace{4pt}
\noindent\textbf{Step 1: Local randomness.}
\[
  r_k \overset{\$}{\leftarrow} \mathbb Z_p
  \quad\text{by each server } \mathcal K.
\]

\noindent\textbf{Step 2: Share publication.}
\[
  c_k = \Encop_{\mathsf{pk}}(r_k),
  \qquad
  g_k = g_0^{r_k} \bmod p.
\]

\noindent\textbf{Step 3: Homomorphic aggregation.}
\[
  \Encop_{\mathsf{pk}}(r)
  = c_1 + c_2,
  \quad
  r = r_1 + r_2 \bmod p.
\]

\noindent\textbf{Step 4: Commitment generator.}
\[
  g = g_1 \cdot g_2 = g_0^{r_1+r_2} = g_0^r \pmod{p}.
\]

\vspace{4pt}
\noindent\textbf{Published information:}
\begin{itemize}
  \item Ciphertext of the global mask: \(\Encop_{\mathsf{pk}}(r)\).
  \item Commitment generator: \(g = g_0^r \bmod p\).
\end{itemize}

\end{tcolorbox}
\end{center}
\caption{Distributed Generation of Mask and Commitment Generator (described for 2 Servers)}
\label{fig:pub-params}
\end{figure}

%Replacement -1
\begin{comment}

\begin{figure}[!htbp]
\setlength{\fboxsep}{.9pt}
\begin{center}
\begin{tcolorbox}[enhanced,
    drop fuzzy shadow southwest, colframe=black,colback=white]

\noindent\textbf{Generation of Published Parameters \textit{[New]}}

\noindent\textbf{Setup:}
Fix a cyclic group \(\mathbb G\) of prime order \(p\), independent generators
\(g_0,h\in\mathbb G\) with unknown \(\log_g h\), and the joint FHE public key
\(\mathsf{pk}\). Let
\[
    \mathsf{Com}(x;\rho)=g_0^x h^\rho .
\]

\vspace{4pt}
\noindent\textbf{Step 1: Local randomness.}
Each server \(S_k\), for \(k\in\{1,2\}\), samples
\[
    r_k\overset{\$}{\leftarrow} \mathbb Z_p .
\]

\noindent\textbf{Step 2: Share publication.}
\[
\begin{aligned}
    c_k = \Encop_{\mathsf{pk}}(r_k),\quad g_k  =g_0^{r_k} \mod{ p}.
\end{aligned}
\]

\noindent\textbf{Step 3: Homomorphic aggregation.}
\[
\begin{aligned}
    \Encop_{\mathsf{pk}}(r)
        &= c_1+c_2,
        & r &= r_1+r_2 \bmod p.
\end{aligned}
\]
\noindent\textbf{Step 4: Commitment generator.}
\[
  g= g_1 \cdot g_2 = g_0^{r_1+r_2} = g_0^r \pmod{p}.
\]

\vspace{4pt}
\noindent\textbf{Published information:}
\begin{itemize}
  \item Ciphertext of the global mask: \(\Encop_{\mathsf{pk}}(r)\).
  \item Commitment generator: \(g= g_0^r \bmod p\).
\end{itemize}

\end{tcolorbox}
\end{center}
\caption{Distributed Generation of Mask and Commitment Parameters (described for 2 Servers)}
\label{fig:pub-params}
\end{figure}

\end{comment}
% End Replacement -1

%2nd Original Table
\begin{figure}[t!]
\setlength{\fboxsep}{.9pt}
\begin{center}
    \begin{tcolorbox}[enhanced,
    drop fuzzy shadow southwest, colframe=black,colback=white]
\footnotesize

\noindent\textbf{Published Information:} 

\begin{itemize}
    \item FHE Encryption of Random Mask $r$: $\Encop_{\mathsf{pk}}(r)$
    \item Commitment Generator: $g= g_0^{r} \mod p$
\end{itemize}

\noindent\textbf{Client Side Operations:} 
Each client performs the six deterministic steps
\begin{itemize}
    \item  \textit{Gradient Computation}: The client computes its local model update
  \( u_i \leftarrow \nabla\mathcal{L}_i \).
  \item \textit{Deterministic Split}: The update vector is split into equal halves (assuming 2 servers) $u_i = (u_{i,1} \,\|\, u_{i,2}) \in \mathbb{R}^{N}, \quad \text{with } u_{i,1}, u_{i,2} \in \mathbb{R}^{N/2},\; $
  \item \textit{FHE encryption of split-updates}: for $k = 1, 2$,
$c_{i,k} = u_{i,k}. \Encop_{\mathsf{pk}}(r)=\Encop_{\mathsf{pk}}(r \cdot u_{i,k})$.
  \item  FHE Encryption of norm (Only for FLTrust)
$e_{i,k} = \Enc{ \frac{r}{\|u_{i,k}\|}}$ [Note - It can also be calculated privately at the server side.]
  \item \textit{Commitment Generation}:
$ h_{i,k_l} = g^{u_{i,k_l}} \bmod p,~\text{for } k = 1, 2 \text{ and } l= 1,2, \dots N$.
  \item \textit{Transmission}: The client sends its encrypted split updates to the respective servers and publishes committed data to everyone. 
\end{itemize}

\noindent\rule{\textwidth}{0.1pt}

\noindent\textbf{Server Side Encrypted Operations:} 
\begin{enumerate}
    \item \textit{Method-Specific Encrypted Aggregation}: the servers invoke one of the sub-routines in any Byzantine-robust aggregation scheme.
    \item \textit{Collaborative Decryption}: The results are decrypted by collecting partial decryption keys from all servers.
    \item \textit{Verification of Correct Encrypted Aggregation}:
    \begin{enumerate}
        \item The method described in Section \ref{subsec:verify} is invoked by another server to verify the correctness of aggregation.
        \item The honest servers abort if the verification fails at this moment.
    \end{enumerate}
\end{enumerate}
\noindent\rule{\textwidth}{0.1pt}

\noindent\textbf{Server Side Plaintext Operations:}
\begin{enumerate}[resume]
    \item If verification succeeds, join the aggregated split updates and compute the final trust score.
    \item Selection: This step is computed in the same way as in plaintext, similar to the weighted averaging adopted in any Byzantine-robust algorithm.
\end{enumerate}
    
\end{tcolorbox}
\end{center}
\caption{\sysname Working Mechanism.} 
\label{fig:modular-fhe-framework}
\end{figure}

\subsubsection{Initialisation} To obscure structural patterns in encrypted client updates and defend against inference attacks, all servers collaboratively generate an encrypted random multiplier~$\Enc{r}$ (Figure~\ref{fig:pub-params}) by jointly contributing secret shares, ensuring that the plaintext $r$ remains unknown to any single party. Each client homomorphically applies $\Enc{r}$ to its local update~$\Enc{u_{ik}}$, yielding randomized ciphertexts $\Enc{r \cdot u_{ik}}$. 
%This randomized masking step statistically de-correlates encrypted inputs from the original updates, preventing reverse-engineering attempts.
This randomized masking step ensures that encrypted updates are obfuscated by an independently sampled multiplier, effectively breaking any consistent statistical correlation with the original plaintexts. As the random factor changes across rounds and remains secret, adversaries cannot accumulate sufficient information to reverse engineer individual updates or inter-client relations.

%Each server receives a distinct subset of encrypted model updates from all clients and independently performs statistical computations and method–specific aggregation on these ciphertexts, resulting in encrypted scores $\Enc{r \cdot s_{ik}}$. These randomized ciphertexts are then decrypted across servers using collaborative decryption, revealing only the obfuscated scores $r \cdot s_{ik} $, which cannot be reverse engineered to recover any client's model update or reveal inter-client statistics due to the presence of more unknowns than known. 

\subsubsection{Privacy-preserving aggregation}
Each server receives a distinct subset of these encrypted masked updates and independently performs the encrypted operations in any method-specific aggregation scheme, resulting in ciphertexts~$\Enc{r \cdot s_{ik}}$, where $s_{ik}$ is the plaintext value of the applied statistical operation. These are then jointly decrypted using threshold decryption, revealing only the obfuscated values $r \cdot s_{ik}$. 
This relies on the fact that $r$ is 
unknown and freshly sampled each round, as discussed above. 
%these intermediate values do not leak client-specific information and cannot be inverted. 

\subsubsection{Verification} 
Verification follows the commitment-based framework described in Section~\ref{subsec:verify}.
Each aggregation primitive produces a publicly verifiable proof. 
%After decryption, each server uses a non-interactive proof of well-formedness to generate proofs for the aggregated results. 
The peer server verifies these proofs by checking, for every coordinate~$\ell$, that the pairing or group relation holds, e.g.,
\(
  g^{d_{i,\ell}} \stackrel{?}{=} h_{i,\ell}
\)
(or the corresponding additive check for a method) holds.  If \emph{any} check fails, then the honest server
refuses to release its decrypted share, causing the protocol
to abort and thus prevent malformed aggregation. %Hence a malicious server cannot force acceptance of an incorrect aggregate. 

\subsubsection{Collaborative decryption} Provided that all verifications succeed, the servers send the decrypted shares (of intermediate calculations), and concatenate to reconstitute each full score vector. The masked plaintext aggregates are then processed exactly as in their classical counterparts: Krum selects the vector with the minimal score, Trimmed-Mean discards extreme coordinates and averages the remainder, while FLTrust computes cosine-similarity weights $w_i$ and outputs the normalised weighted average. These final operations involve only inexpensive plaintext arithmetic (minimal ciphertext arithmetic) and therefore add negligible overhead compared with the encrypted phase.

%This careful transition between secure encrypted computation and strategic exposure of obfuscated plaintext values enables the protocol to maintain \textit{strong confidentiality guarantees, ensure the integrity of updates, and scale to a large number of clients without incurring prohibitive computational costs.}
%\smallskip

\subsubsection{Strengthened Security Model} The proposed framework operates under a strengthened security model enabled by MK-FHE under a distributed multi-server architecture, which collectively provides strong guarantees of confidentiality and correctness. A critical security concern in this setting is the possibility of collusion between malicious clients and a compromised server: since clients know their own updates $u_i$, a colluding client-server pair observing the masked value $r \cdot u_i$ could recover the global mask $r$ by simple division, subsequently unmasking all other honest clients' intermediate statistics. 
\sysname addresses this through an additive blinding mechanism inspired by the order-preserving encryption scheme of Dyer et al.~\cite{divisors2017order}, wherein each intermediate masked statistic is additively blinded by an independently sampled $\delta_i \xleftarrow{\$} \{0, \ldots, B-1\}$. The bound $B$ is chosen to satisfy two constraints simultaneously: $B < r \cdot \epsilon / 2$ to ensure ordering is preserved across distinct client updates, and $B > 2^\lambda$ to ensure brute-force recovery of $r$ is computationally infeasible. These constraints are jointly satisfiable by choosing $r > 2^{\lambda+1}/\epsilon$, which is achievable within standard cryptographic group sizes, and guarantee both the correctness of ordering-dependent aggregation rules and security against mask recovery by any coalition of colluding clients and servers. %Beyond this, the collaborative randomization and encrypted computation prevent any server, even colluding subsets, from inferring client updates, while the framework enables efficient integration of verification mechanisms without incurring the prohibitive cost of executing such checks over fully encrypted data. Security guarantees remain intact as long as at least one server is honest.

\subsection{Core Aggregation Primitives}

Rather than a collection of independent protocol instantiations, \sysname is a general framework for privacy-preserving Byzantine-robust aggregation. 
The framework supports a number of cryptographically-verifiable aggregation patterns that enable many existing schemes to be supported directly. 
Each primitive operates on a collection of masked, encrypted values, and produces outputs that can be verified before their use in subsequent aggregation steps. 
The core primitives are as follows: 

\subsubsection{Linear aggregation (LA)}
Many robust aggregation methods involve linear operations on client updates, i.e., addition, subtraction, (weighted) averaging, and norm computation. 
These can be handled in an encrypted space: given the encrypted client updates 
${\text{Enc}_{pk}(r \cdot u_i)}$, the servers can evaluate expressions of the form
$\sum_i \alpha_i u_i$
homomorphically, where the parameters $\alpha_i$ may be fixed constants, or public values derived elsewhere within the aggregation process. 
%This primitive underlies standard FedAvg aggregation, norm clipping, trust-weighted averaging in FLTrust, and the final aggregation stage of many robust defenses.

\subsubsection{Pairwise statistical evaluation (PSE)}
Many Byzantine-robust aggregation rules are based on examining all pairwise relationships between client updates, expressed as (low-degree) polynomial functions on pairs of values.
This captures (squared) Euclidean distance, inner products, cosine similarity and other coordinate-wise differences. 

For two encrypted updates $u_i$ and $u_j$, servers compute encrypted statistics of the form
\[
\text{Enc}_{pk}(r^c \cdot f(u_i,u_j)),
\]
\noindent
where $f(\cdot)$ is the low-degree polynomial function and $c$ captures the degree of the operation.
Higher-order statistics can also be handled (e.g., considering all triples of values), but are not popular in practice due to the higher costs (cubic or worse in the number of clients). 

\subsubsection{Secure comparison and ordering (SCO)}
A common pattern in robust aggregation is to consider values in sorted order or to find a value.

\sysname allows secure comparison via computing masked values, which can then be collaboratively decrypted to support order-preserving statistics. Given two values $x$ and $y$, servers can obtain only the masked difference $r(x-y)$,
which reveals the sign of $(x-y)$ while obscuring the underlying values. 
This also enables more complex patterns, such as coordinate-wise sorting via compare-and-swap.

\subsubsection{Selection and Filtering (SF)}
The ability to compare and sort also enables various trimming and pruning rules based on top-$k$ or bottom-$k$. After the required statistics have been computed and verified, \sysname performs selection in the plaintext domain using only masked intermediate values.
Since the random mask preserves the relative ordering, the selection and filtering decisions match those applied to the original plaintext values.

\subsection{Instantiation of Byzantine-Robust Aggregation Schemes} 
%Based on the above primitives, then 
Any aggregation scheme can be implemented within \sysname if it can be expressed in terms of a bounded number of applications of the above primitives (LA-linear aggregation, PSE-pairwise statistical evaluation, SCO-secure comparison and ordering, SF-selection and filtering). 
To exhibit its utility, 
%Having discussed the modular framework, 
we apply this principle to the design of three state-of-the-art Byzantine-robust algorithms in \sysname: Krum~\cite{blanchard2017machine}, Trimmed-Mean~\cite{yin2018byzantine}, FLTrust~\cite{cao2020fltrust}, and MESAS~\cite{krauss2023mesas} defined in Section~\ref{subsec-byz-rob}.

\begin{figure}[t!]
\setlength{\fboxsep}{.9pt}
\begin{center}
\begin{tcolorbox}[enhanced,
   drop fuzzy shadow southwest, colframe=black,colback=white]
\footnotesize

\noindent\textbf{Input:} Each server $k$ receives a set of $n$ encrypted client updates $\{c_{i,k} = \Enc{r\cdot u_{i,k}}: i=1,2,\dots,n\}$, where $u_{i,k}$ is $k^{th}$ subset of client update of client \textit{i}. %$f$ is the bound on the number of Byzantine clients.

\noindent\rule{\textwidth}{0.1pt}

\noindent\textbf{Server-Side Encrypted Operations}
\begin{enumerate}
\item For every ordered pair \((i,j)\) the servers compute
\[
\Enc{d_{ij_{k}}} = (c_{i,k}-c_{j,k})^2,  \forall j \ne i.
\]
Note that,
\[
\Enc{d_{ij_{k}}}=\Enc{(r \cdot x_{i,k}-r \cdot x_{j,k})^2}.
\]
\item \textit{Collaborative decryption.}
Decrypt to obtain
\[
d_{ij_k}=(r \cdot (x_{i,k}-x_{j,k}))^2.
\]

\item []\textbf{Verification of Correct Computation.}
\item Servers publish $e(g_0,g_0)^{d_{ij,k_l}}$ as a proof for the correct computation.
\item Peer Servers: For every coordinate \(\ell\) and pair \((i,j)\) verify,
\[
  e(h_{ik_{\ell}}(h_{jk_{\ell}})^{-1},h_{ik_{\ell}}(h_{jk_{\ell}})^{-1}) \stackrel{?}{=} e(g_0,g_0)^{d_{ij,k_l}}.
\]

\noindent\rule{0.9\textwidth}{0.1pt}

\noindent\textbf{Plaintext operations}
\item \textbf{Pairwise distances:} Now, any of the servers (assuming at least one is honest) compute the pairwise distance with all decrypted $d_{ij_{k}}$. The distances are scaled up by $r$. However, they preserve order, which is sufficient for Krum.
    \item \textbf{Score computation:} This step is computed as if with plaintext, as the sum of $n-f-2$ distances, for each client.
    \item \textbf{Selection:} This step can be computed in the same way as in plaintext, selecting the client update with the smallest score, as the next global model update.
\end{enumerate}

\end{tcolorbox}
\end{center}
\caption{Instantiation of Krum in \sysname} 
\label{fig:modular-fhe-krum}
\end{figure}

\begin{comment}
     There remains a concern that all distances, though scaled, would still follow the same distribution. To avoid any possible adaptive inferences, we suggest choosing different $r$ in each iteration.
\end{comment}

\subsubsection{\sysname-enabled KRUM (Figure~\ref{fig:modular-fhe-krum})}

% After receiving the local updates from clients, servers perform encrypted operations in steps 1-3, and then, after doing obfuscation, they decrypt to get the plaintext shares and perform the remaining operations to calculate the final scores in steps 4-6. Finally, the global model for the next iteration is selected, similar to KRUM, based on the smallest score. %We provide a detailed example for this in Appendix~\ref{}.

%\noindent\textbf{\sysname-enabled KRUM (Figure~\ref{fig:modular-fhe-krum}).}
Each aggregation server \( k \) receives a masked share
\( c_{i,k} = \Encop_{\mathsf{pk}}(r \cdot u_{i,k}) \) for every client
\( i \). For every ordered pair \((i,j)\), it homomorphically
computes the encrypted squared Euclidean distance of the corresponding
halves via PSE,
\[
  \Encop_{\mathsf{pk}}\!\bigl(d_{ij,k}\bigr)
  = \bigl(c_{i,k}-c_{j,k}\bigr)^{2}.
\]
Collaborative decryption exposes only the obfuscated value
\( d_{ij,k}=r^{2}(u_{i,k}-u_{j,k})^{2} \), after which
server \( k \) broadcasts a pairing commitment
\( e(g_0,g_0)^{\,d_{ij,k_\ell}} \) for every coordinate \(\ell\).

Let 
\(
\Delta_{ij,k_\ell}=u_{i,k_\ell}-u_{j,k_\ell}
\)
and recall that the commitment for a single coordinate is  
\(h_{i,k_\ell}=g^{u_{i,k_\ell}}=g_0^{r \cdot u_{i,k_\ell}}\).
Using bilinearity of the pairing \(e(\cdot,\cdot)\) we obtain
\[
\begin{aligned}
e\bigl(h_{i,k_\ell}h_{j,k_\ell}^{-1},\; h_{i,k_\ell}h_{j,k_\ell}^{-1}\bigr)
  &= e\bigl(g^{\Delta_{ij,k_\ell}},\; g^{\Delta_{ij,k_\ell}}\bigr) \\[-2pt]
  &\hspace{1em}\text{\small (substituting } h_{x,k_\ell} = g^{u_{x,k_\ell}} \text{)} \\[4pt]
  &= e(g, g)^{\Delta_{ij,k_\ell} \cdot \Delta_{ij,k_\ell}} \\[-2pt]
  &\hspace{1em}\text{\small (bilinearity: } e(g^a, g^b) = e(g, g)^{ab} \text{)} \\[4pt]
  &= e(g_0, g_0)^{r^2 \cdot (u_{i,k_\ell} - u_{j,k_\ell})^2} \\[2pt]
  &= e(g_0, g_0)^{d_{ij,k_\ell}}.
\end{aligned}
\]

Since the protocol establishes this relation,  
\(
e(g_0,g_0)^{r^2 \cdot (u_{i,k_\ell}-u_{j,k_\ell})^{2}}
           =e(g_0,g_0)^{d_{ij,k_\ell}}. 
\)
Because $e(g_0,g_0)$ generates $\mathbb G_T$, exponents coincide mod~$p$, yielding the stated equality.  
\[
d_{ij,k_\ell}\;=\;r^2 \cdot (u_{i,k_\ell}-u_{j,k_\ell})^{2}\bmod p .
\]
The peer server verifies these commitments with the above test.  The passing
check guarantees that server \( k \) has performed the aggregation
correctly.
Once all checks pass, either server (at least
one is assumed to be honest) 
computes the $n\!-\!f\!-\!2$ smallest (scaled) pairwise distances for every
client, sums them to obtain the Krum score, and selects the update with the minimal score as the next global model (using SCO and SF). 
Because the
mask preserves distance ordering, the plaintext post-processing is
identical to the classical Krum rule, while preserving privacy. % during the expensive arithmetic phase.

\subsubsection{\sysname-enabled Trimmed-Mean (Figure~\ref{fig:modular-fhe-tm})}

\begin{figure}[!htbp]
\setlength{\fboxsep}{.9pt}
\begin{center}
    \begin{tcolorbox}[enhanced,
    drop fuzzy shadow southwest, colframe=black,colback=white]
\footnotesize

\noindent\textbf{Input:} Each server $k$ receives a set of $n$ encrypted client updates $\{c_{i,k} = \Enc{r\cdot u_{i,k}}: i=1,2,\dots,n\}$, where $u_{i,k}$ is $k^{th}$ subset of client update of client \textit{i}. %$f$ is the bound on the number of Byzantine clients.

\noindent\rule{\textwidth}{0.1pt}

\noindent\textbf{Server-Side Encrypted Operations.}
\begin{enumerate}
    \item [] \textbf{Encrypted operations}
    \item For every ordered pair \((i,j)\) the servers compute,
\[
\Enc{d_{ij_{k}}} = (c_{i,k}-c_{j,k}), \quad \forall j \ne i.
\]
Note that,
\[
\Enc{d_{ij_{k}}} =\Enc{r \cdot u_{i,k}-r \cdot u_{j,k}}.
\]
    \item \textit{Collaborative decryption.}
    Decrypt to obtain
    
        \(d_{ij_k}=r \cdot (u_{i,k}-u_{j,k})\).

    \item []\textbf{Verification of Correct Computation.}
    
    \item Server publishes $g_0^{d_{ij,k_l}}$ as a proof for correct computation (for every coordinate \(\ell\)).
    \item Peer Server: For every coordinate \(\ell\) and pair \((i,j)\) check,
\[
  h_{ik_{\ell}} \cdot (h_{jk_{\ell}})^{-1} \mod p \stackrel{?}{=} g_0^{d_{ij,k_l}} \mod p.
\]    
    \noindent\rule{0.9\textwidth}{0.1pt}
    \item [] \textbf{Plaintext operations}

    \item \textbf{Coordinate-wise sorting:} The sign of each coordinate of $d_{ij_k}$ represents the order of coordinates at that position in two updates. 
    
    \begin{enumerate}
        \item From $d_{ij_k}$, we can prepare a binary sign vector $sgn$, such that $sgn_l = \begin{cases}
    0,& \text{if } u_{i,k_l} \leq u_{j,k_l}\\
    1,              & \text{otherwise}
\end{cases}$. 

        \item Then, prepare a $\textbf{1}$ vector containing $1$ in each coordinate i.e. $\textbf{1} = (1,1,\dots,1)$.

        \item Create temporary vectors as below: 
        \[
        \begin{aligned}
        \Enc{tmp_1} = sgn\cdot\Enc{u_{j k}} 
             \\ + (\textbf{1} - sgn)\cdot\Enc{u_{i k}}\,
        \end{aligned}
        \]
        
        \[
        \begin{aligned}
        \Enc{tmp_2} = sgn\cdot\Enc{u_{i k}}\\ 
             + (\textbf{1} - sgn)\,\Enc{u_{j k}}\,.
        \end{aligned}
        \]

        %\[\Enc{tmp_1} = (\textbf{1} - sgn) \cdot \Enc{u_{ik}} + sgn \cdot \Enc{u_{jk}} ,\]
        %\[\Enc{tmp_2} = sgn \cdot \Enc{u_{ik}} + (\textbf{1} - sgn) \cdot \Enc{u_{jk}} .\]

        \item $\Enc{tmp_1}$ and $\Enc{tmp_2}$ are sorted vectors, so we can assign them back to $\Enc{u_{ik}}$ and $\Enc{u_{jk}}$, respectively. This process repeats until the coordinates from all local model updates are sorted. %It uses the \textit{ Batcher’s odd-even mergesort network} network to generate compare-and-swap index pairs for efficient sorting.
    \end{enumerate}

    \item \textbf{Trim extremes:} This step is computed as for plaintext, pruning the top and bottom $f$ updates.
    \item \textbf{Selection:} FedAvg can be applied on the selected updates.
    
\end{enumerate}

\end{tcolorbox}
\end{center}
\caption{Trimmed-Mean in \sysname.} 
\label{fig:modular-fhe-tm}
\end{figure}

% After receiving the local updates from clients, servers perform encrypted operations in steps 1-3, and then, after doing obfuscation, they decrypt to get the plaintext shares. Then servers do coordinate-wise sorting for the updates in a pairwise fashion. It involves some plaintext-ciphertext computations. Finally, the extremes are trimmed, and the remaining updates are averaged to get the global model for the next iteration. %We provide a detailed example for this in Appendix~\ref{}.

%\noindent\textbf{\sysname-enabled Trimmed-Mean (Figure~\ref{fig:modular-fhe-tm}).}
Upon receiving the masked ciphertexts
\(c_{i,k}=\Encop_{\mathsf{pk}}(r\,u_{i,k})\) for every client \(i\),
each server \(k\) homomorphically subtracts every ordered pair \((i,j)\) to obtain the encrypted coordinate-wise differences
\(\Encop_{\mathsf{pk}}(d_{ij,k})\!=\!(c_{i,k}-c_{j,k})\), via PSE.
A collaborative decryption then reveals only the obfuscated values
\(d_{ij,k}=r\,(u_{i,k}-u_{j,k})\).
For each coordinate \(\ell\), the honest peer checks the relation
\(h_{i,k_\ell}h_{j,k_\ell}^{-1}\equiv g_0^{d_{ij,k_\ell}}\pmod p\),
thus ensuring the correctness of every difference without exposing plaintext updates.
A binary sign vector derived from the decrypted \(d_{ij,k}\) guides a homomorphic rearrangement that sorts the encrypted coordinates in ascending order (SCO).
Servers perform a secure compare-and-swap procedure, realised by Batcher’s odd-even mergesort network, on the ciphertexts. Batcher’s odd–even mergesort network executes a data‐independent sequence of $\mathcal{O}(n\log^{2}n)$ compare‐and‐swap operations, in contrast to the $\mathcal{O}(n^2)$ pairwise comparisons required by naive sorting. By integrating this network into Trimmed Mean, servers can efficiently perform coordinate-wise sorting over encrypted client updates, reducing the number of comparisons and overall computation compared to a fully quadratic approach.
Once sorting is complete,
the largest and smallest \(f\) values are discarded (SF), and the remaining
\(n-2f\) entries per dimension are averaged.
Because the random mask \(r\) preserves relative ordering, this
plaintext post-processing exactly reproduces the classical
The trimmed-mean rule, where all expensive computations, i.e., sorting, are done in the combined plaintext-ciphertext domain.
%thus retaining privacy without sacrificing robustness.

\subsubsection{\sysname-enabled FLTrust (Figure~\ref{fig:modular-fhe-fltrust})}

\begin{figure}[t!]
\setlength{\fboxsep}{.9pt}
\begin{center}
\begin{tcolorbox}[enhanced,
   drop fuzzy shadow southwest, colframe=black,colback=white]
\footnotesize

\noindent\textbf{Input:} 
\begin{itemize}
    \item Each server $k$ receives a set of $n$ encrypted client updates $\{c_{i,k} = \Enc{r\cdot u_{i,k}}: i=1,2,\dots,n\}$, where $u_{i,k}$ is $k^{th}$ subset of client update of client \textit{i}. %$f$ is the bound on the number of Byzantine clients. 
    \item Additionally receives encrypted norm $\{e_{i} = \Enc{\frac{r}{\| u_{i}\|}}: i=1,2,\dots,n\}$ [Note - It can also be calculated privately at the server side.]
    \item Server Update Vector: $u_0, \|u_0\|$
\end{itemize}
\noindent\rule{\textwidth}{0.1pt}

\noindent\textbf{Server-Side Encrypted Operations.}
\begin{enumerate}

\item \textit{Coordinate-wise products.}
For each client \(i\) 
\[
 \Encop_{\mathsf{pk}}(d_i) \;=\;
  u_{0} \odot c_{i,k} = \Encop_{\mathsf{pk}}(r \cdot u_{i,k} \odot u_0)
\]
\item \textit{Collaborative decryption.}
Decrypt to obtain $d_i$

\item []\textbf{Verification of Correct Computation.}
  \item Server Publishes $g_0^{d_{i_l}}$ as a proof for the correct computation (for every coordinate \(\ell\)).
  \item Peer Server: For every \(\ell\) check
        $$h_{i,k_l}^{u_{t_l}} \stackrel{?}{=}  g^{d_{i_l}}.$$
  \item Abort on the first mismatch.

\noindent\rule{0.9\textwidth}{0.1pt}
\noindent\textbf{Plaintext operations}
\item \textbf{Trust score.}
\[
  w_i=\max \left \{0,\quad \frac{\sum_{\ell=1}^{N} d_{i,\ell}}
            {\|u_0\|} \right\}\; .
\]

\item \textbf{Scaled Updates}
\[
 \Encop_{\mathsf{pk}}( u_i^{\text{scaled}}) 
 = \Encop_{\mathsf{pk}} \left ( w_i \cdot \frac{\|u_0\|}{\|u_i\|} \cdot u_i \right),
\]
\[
= \left (w_i \cdot\|u_0\|  \right) \cdot \Encop_{\mathsf{pk}}\left(\frac{1}{\|u_i\|}\right)\cdot \Encop_{\mathsf{pk}}(u_i).
\]
\item \textbf{Weighted Aggregation.}\;
\[
\Encop_{\mathsf{pk}}(u^{t+1} )= \frac{1}{\sum_{i=1}^{n} w_i} \cdot \sum_{i=1}^{n} {\Encop_\mathsf{pk}(u_i^{\text{scaled}})}
\]
\end{enumerate}

\end{tcolorbox}
\end{center}
\caption{FLTrust in \sysname.}
\label{fig:modular-fhe-fltrust}
\end{figure}

%\noindent\textbf{\sysname-enabled FLTrust (Figure~\ref{fig:modular-fhe-fltrust}).}
Every server \( \mathcal S_k \) receives for each client \( i \)
(i) a masked ciphertext share
\(c_{i,k}= \Encop_{\mathsf{pk}}(r\cdot u_{i,k})\)  
and (ii) the encrypted reciprocal norm
\(e_i=\Encop_{\mathsf{pk}}\!\bigl(\|u_i\|^{-1}\bigr)\);
the server itself holds the trusted reference vector
\(u_0\) and its norm \(\|u_0\|\).
These computations rely on linear aggregation (LA). 
It first evaluates, under encryption, the coordinate-wise products
\( \Encop_{\mathsf{pk}}(d_i) 
  =\Encop_{\mathsf{pk}}\!\bigl(r\cdot u_{i,k}\odot u_0\bigr) \).
A threshold decryption reveals only the masked inner product
vector \(d_i=r\cdot u_{i,k}\odot u_0\).
For every coordinate \(\ell\) the peer server verifies
\(h_{i,k_\ell}^{\,u_{t,\ell}}\stackrel{?}{=}g^{d_{i,\ell}}\);
failure of any check aborts the protocol, ensuring integrity via the discrete-log assumption.
When all checks pass, the score vectors are concatenated to
form \(d_i\in\mathbb R^{N}\).
Each server then computes the client’s FLTrust weight
\(
  w_i=\max\{0,\sum_{\ell=1}^{N}d_{i,\ell}/\|u_0\|\},
\)
a quantity still proportional to the true cosine similarity.
Finally, the update is rescaled homomorphically (LA):
\[
  \Encop_{\mathsf{pk}}(u_i^{\text{scaled}})=
  \Encop_{\mathsf{pk}}\!\bigl(
      w_i\,\tfrac{\|u_0\|}{\|u_i\|}\,u_i
  \bigr),
\]
and the weighted average
\(
  \Encop_{\mathsf{pk}}(u^{t+1})=
  \bigl(\sum_i w_i\bigr)^{-1}\!
  \sum_i \Encop_{\mathsf{pk}}\!\bigl(u_i^{\text{scaled}}\bigr)
\)
is produced under encryption.
%Because all costly arithmetic occurs on masked ciphertexts, privacy of individual updates is preserved, while the plaintext post-processing remains identical to the classical FLTrust aggregation rule.

\subsubsection{Enabling other aggregation schemes}
Beyond the three schemes instantiated above, \sysname's modular design extends naturally to a broad class of Byzantine-robust aggregation schemes. 
We observe that defenses such as MESAS~\cite{krauss2023mesas}, CrowdGuard~\cite{rieger2022crowdguard}, and FLAME~\cite{nguyen2022flame} can be tailored to \sysname.
FLAME computes cosine similarity and performs clustering after that (LA, PSE). 
Further, it computes the Euclidean distance and finds the median for adaptive clipping (SCO). As discussed above, most of these statistical operations can be tailored to \sysname. CrowdGuard also computes layer-wise Cosine and Euclidean distances, which we have already demonstrated with Krum and FLTrust. MESAS includes six statistical operations, including cosine similarity, Euclidean distance, count, variance, maximum, and minimum (PSE). 
Subsequently, it performs multiple clustering-based statistical tests to prune (LA, SF). 
Since MESAS covers the full set of statistical operations, we describe how to tailor it to \sysname. Details can be referred to in~\ref{mesas}. %Security analysis is discussed in Appendix~\ref{app1} -~\ref{app4}.

%-------------------------------------------------
\section{Security Analysis}\label{sec:security}
%-------------------------------------------------
This section establishes that the proposed protocol guarantees confidentiality and integrity against both honest-but-curious and actively malicious aggregation servers.  All reductions invoke the \emph{Decisional Ring–LWE} assumption for ciphertext privacy and the hardness of the
\emph{Discrete Logarithm Problem} (DLP) in a prime-order group for commitment soundness.

Throughout, let~$\mathcal S_{1},\mathcal S_{2}$ be the two aggregation
servers, $\mathsf{pk}$ the public multi-key FHE key, and let
\[
  \mathsf{view}^{\mathrm{pre}}_k
  \;=\;
  \bigl\{\,
      c_{i,k},\,
      h_{i,k_\ell},\,
      e_i
  \bigr\}_{i,\ell}
\]
denote the entire information set available to
server~$\mathcal S_k$ \emph{before} any homomorphic processing
($c_{i,k}=\Encop_{\mathsf{pk}}(r \cdot u_{i,k})$,
$h_{i,k_\ell}=g^{u_{i,k_\ell}}$,
and, if required, $e_i=\Encop_{\mathsf{pk}}(\|u_i\|^{-1})$).

Let
\[
  \mathsf{view}^{\mathrm{post}}_k
  \;=\;
  \mathsf{view}^{\mathrm{pre}}_k
   \cup
  \bigl\{\,
      d_i  \mid  i\in[n]
  \bigr\}
\]
be the information set \emph{after} collaborative decryption, where $d_i$ is either the masked pairwise‐distance vector (Krum), the masked coordinate difference vector (Trimmed‐Mean), or the masked inner‐product vector (FLTrust).

Further, each server publishes a proof element \(g^{d_i}\in\mathbb G\) and extends their local view to
\[
  \mathsf{view}^{\mathrm{proof}}_k
  =\{g^{d_{i_k}}\mid i\in[n]\}.
\]
%Finally, let $\sigma$ be the final plaintext aggregate or selection released to all parties.

Each peer server validates the published
\(g^{d_i}\) against client commitments
\(h_{i,k_\ell}\) using the method-specific equation:
\[
\text{Krum:}\;
  e\!\bigl(h_{i,k_\ell}h_{j,k_\ell}^{-1},
           h_{i,k_\ell}h_{j,k_\ell}^{-1}\bigr)
  \stackrel{?}{=} e(g,g)^{d_{ij,k_\ell}}.
\]
\[
\text{Trimmed-Mean:}\;
  h_{i,k_\ell}h_{j,k_\ell}^{-1}\stackrel{?}{=}g^{d_{ij,k_\ell}}.
\]
\[
\text{FLTrust:}\;
  h_{i,k_\ell}^{u_{t,\ell}}\stackrel{?}{=}g^{d_{i,\ell}}.
\]
A single failure causes the honest server to withhold sharing a decrypted share, forcing an abort and preventing acceptance of the forged data.

The sequence of Theorems~\ref{thm:input-conf} to~\ref{thm:integrity} discussed in~\ref{app1} to ~\ref{app3} uses the above views
\(\mathsf{view}^{\mathrm{pre}}_k\),
\(\mathsf{view}^{\mathrm{post}}_k\), and 
\(\mathsf{view}^{\mathrm{proof}}_k\) to prove confidentiality and integrity for the
entire protocol.

\section{Evaluation}
In this section, we present a comprehensive evaluation of \sysname, evaluating its computation, communication costs, robustness and verification cost analysis.
%, our proposed privacy-preserving, Byzantine-resilient and verifiable federated learning (FL) framework. 
%Our goal is to demonstrate that \sysname achieves practical performance while offering strong confidentiality, robustness, and verifiability guarantees,  improving over existing secure FL approaches.
%in terms of both scalability and efficiency.

\subsection{Experimental setup}
\subsubsection{Implementation} All experiments are conducted on a system with Intel Xeon E5-4650 v4 @ 2.20 GHz with 112 logical CPUs, and 256 GB RAM. We implemented our framework in C++ using the Microsoft SEAL library\footnote{https://github.com/microsoft/SEAL} for homomorphic encryption. 
Our code is available here\footnote{https://github.com/harshkasyap/provefl}. 
For local model training, we used the PyTorch\footnote{https://github.com/pytorch/pytorch} framework. 
We instantiated clients to run independently. 
Multi-key support and ciphertext packing are leveraged to maximize throughput. 
All results reported are averaged over 5 independent runs, unless specified otherwise.

\subsubsection{Evaluation Goals} Our evaluation seeks to answer the following key questions:

\paragraph*{Efficiency.} What is the computational and communication overhead introduced by encrypted computation and partial decryption, to integrate state-of-the-art Byzantine robust schemes? How does \sysname scale with increasing number of clients, servers, and model size?

\paragraph*{Breakdown.} What is the breakdown of individual components in \sysname? We chose to show this to demonstrate the time savings from offloading computations to plaintext.

\paragraph*{Byzantine Robustness.} How well does the system perform (in terms of accuracy) in the presence of Byzantine clients under various poisoning attacks (Backdoor Attack, Trim Attack) and state-of-the-art Byzantine robust schemes tailored to \sysname?

\paragraph*{Comparative Performance.} How does \sysname compare to prior secure FL systems such as ELSA~\cite{rathee2023elsa}, Prio~\cite{addanki2022prio+,corrigan2017prio} and RoFL~\cite{burkhalter2021rofl}? We chose these systems to compare with, since they cover single-server and distributed servers and can support at least a Byzantine-robust scheme (norm clipping).

\paragraph*{Verification Cost.} We analsze the verification cost incurred by the server to validate that the aggregation has been performed correctly, assuming that at least one server is honest.

\subsubsection{Parameters and Variations} We vary the following system parameters to study performance under different settings:

-- Number of Clients (C): {10, 50, 100, 200}.

-- Number of Servers (S): {2, 4, 10}.

-- Dataset - Model Size (P): (1) 100k and 500k for Benchmarking; (2) 62k (CIFAR-10 S~\cite{krizhevsky2009learning} - LeNet5~\cite{lecun1989backpropagation}); (3) 273k (CIFAR-10 L~\cite{krizhevsky2009learning} - ResNet-18~\cite{he2016deep}); (4) 818k (Shakespeare~\cite{caldas2018leaf} - LSTM~\cite{hochreiter1997long})
    
-- Poisoning Attacks: Trim~\cite{fang2020local} and Backdoor~\cite{bagdasaryan2020backdoor} attacks.

-- Poisoning Client Fraction (A\%): {10\%, 20\%, 40\%.}
    
-- Byzantine-Robust Aggregation Rules: $L_2$ norm (for comparison), Krum~\cite{blanchard2017machine}, Trimmed Mean~\cite{yin2018byzantine}, FLTrust~\cite{cao2020fltrust}.

-- Baselines: ELSA~\cite{rathee2023elsa}, RoFL~\cite{burkhalter2021rofl}, Prio~\cite{corrigan2017prio,addanki2022prio+}.

-- FHE polynomial modulus degree - 32768.

\begin{figure*}[t!]
\centering
%\hfill
\begin{subfigure}{0.24\textwidth}
\centering
\begin{tikzpicture}
\begin{axis}[
    width=\textwidth,
    height=4cm,
    xlabel={\# Clients},
    ylabel={Time (s)},
    title={L2 Norm},
    legend pos=north west,
    legend style={font=\scriptsize},
    xtick={10,50,100,200},
    ymin=0,
    grid=major,
]
\addplot[mark=*, blue] coordinates {(10, 0.196416) (50, 0.762095) (100, 1.46639) (200, 2.90938)};
\addplot[mark=square*, red] coordinates {(10, 0.160299) (50, 0.697062) (100, 1.35593) (200, 2.67682)};
\addplot[mark=triangle*, green!60!black] coordinates {(10, 0.142029) (50, 0.652917) (100, 1.31738) (200, 2.56295)};
\legend{2 servers, 4 servers, 10 servers}
\end{axis}
\end{tikzpicture}
\end{subfigure}%
%\hfill
\begin{subfigure}{0.24\textwidth}
\centering
\begin{tikzpicture}
\begin{axis}[
    width=\textwidth,
    height=4cm,
    xlabel={\# Clients},
    title={Krum},
    xtick={10,50,100,200},
    ymin=0,
    grid=major,
]
\addplot[mark=*, blue] coordinates {(10, 0.518932) (50, 9.89996) (100, 164.501) (200, 932.087)};
\addplot[mark=square*, red] coordinates {(10, 0.449708) (50, 7.90608) (100, 154.053) (200, 662.751)};
\addplot[mark=triangle*, green!60!black] coordinates {(10, 0.355812) (50, 7.73843) (100, 123.761) (200, 543.251)};
\end{axis}
\end{tikzpicture}
\end{subfigure}%
%\hfill
\begin{subfigure}{0.24\textwidth}
\centering
\begin{tikzpicture}
\begin{axis}[
    width=\textwidth,
    height=4cm,
    xlabel={\# Clients},
    title={Trimmed Mean},
    xtick={10,50,100,200},
    ymin=0,
    grid=major,
]
\addplot[mark=*, blue] coordinates {(10, 4.2624183) (50, 55.2136228) (100, 161.601243) (200, 404.513319)};
\addplot[mark=square*, red] coordinates {(10, 2.2161577) (50, 28.9768077) (100, 81.300234) (200, 202.281614)};
\addplot[mark=triangle*, green!60!black] coordinates {(10, 1.0882726) (50, 14.3614979) (100, 38.8895205) (200, 102.6255565)};
\end{axis}
\end{tikzpicture}
\end{subfigure}%
%\hfill
\begin{subfigure}{0.24\textwidth}
\centering
\begin{tikzpicture}
\begin{axis}[
    width=\textwidth,
    height=4cm,
    xlabel={\# Clients},
    title={FLTrust},
    xtick={10,50,100,200},
    ymin=0,
    grid=major,
]
\addplot[mark=*, blue] coordinates {(10, 0.932282) (50, 4.53293) (100, 9.02104) (200, 18.3548)};
\addplot[mark=square*, red] coordinates {(10, 0.616188) (50, 3.02731) (100, 6.09845) (200, 12.1634)};
\addplot[mark=triangle*, green!60!black] coordinates {(10, 0.46568) (50, 2.28502) (100, 4.53496) (200, 9.0667)};
\end{axis}
\end{tikzpicture}
\end{subfigure}

\caption{Computation time vs. no. of clients (and servers) for different Byzantine-robust aggregation rules (100k params).}
\label{fig:aggregation-times-1}
\end{figure*}

\begin{figure*}[t!]
\centering
% $L_2$ norm Plot
\begin{subfigure}{0.24\textwidth}
\centering
\begin{tikzpicture}

\begin{axis}[
    width=\textwidth,
    height=4cm,
    xlabel={\# Clients},
    ylabel={Time (s)},
    title={L2 Norm},
    legend pos=north west,
    legend style={font=\scriptsize},
    xtick={10,50,100,200},
    ymin=0,
    grid=major,
]
\addplot+[mark=*, blue] coordinates {(10,0.140929) (50,0.655231) (100,1.28545) (200,2.55964)};
\addlegendentry{62k}
\addplot+[mark=square*, red] coordinates {(10,0.213762) (50,0.849987) (100,1.51453) (200,2.96031)};
\addlegendentry{273k}
%\addplot+[mark=square*] coordinates {(10,0.269433) (50,0.931466) (100,1.69879) (200,3.40037)};
%\addlegendentry{500k}
\addplot+[mark=triangle*, green!60!black] coordinates {(10,0.376273) (50,1.05894) (100,2.08115) (200,4.03572)};
\addlegendentry{818k}
\end{axis}
\end{tikzpicture}
\end{subfigure}
% Krum Plot
\begin{subfigure}{0.24\textwidth}
\centering
\begin{tikzpicture}
\begin{axis}[
    width=\textwidth,
    height=4cm,
    title={Krum},
    xtick={10,50,100,200},
    xlabel={\#Clients},
    ymin=0,
    grid=major
]
\addplot+[mark=*, blue] coordinates {(10,0.356089) (50,6.16055) (100,123.761) (200,543.251)};
%\addlegendentry{62k}
\addplot+[mark=square*, red] coordinates {(10,0.552175) (50,11.5088) (100,169.603) (200,1059.91)};
%\addlegendentry{273k}
%\addplot+[mark=square*] coordinates {(10,0.70652) (50,18.0251) (100,190.587) (200,1425.94)};
%\addlegendentry{500k}
\addplot+[mark=triangle*, green!60!black] coordinates {(10,0.933462) (50,21.7743) (100,217.85) (200,2122.22)};
%\addlegendentry{818k}
\end{axis}
\end{tikzpicture}
\end{subfigure}
% Trimmed-Mean Plot
\begin{subfigure}{0.24\textwidth}
\centering
\begin{tikzpicture}
\begin{axis}[
    width=\textwidth,
    height=4cm,
    title={Trimmed-Mean},
    xtick={10,50,100,200},
    xlabel={\#Clients},
    ymin=0,
    grid=major
]
\addplot+[mark=*, blue] coordinates {(10,0.5889662) (50,7.7693752) (100,21.3936372) (200,56.2604686)};
%\addlegendentry{62k}
\addplot+[mark=square*, red] coordinates {(10,2.923016) (50,38.97521) (100,106.937912) (200,296.113413)};
%\addlegendentry{273k}
%\addplot+[mark=square*] coordinates {(10,4.70586) (50,63.892556) (100,174.600615) (200,477.386976)};
%\addlegendentry{500k}
\addplot+[mark=triangle*, green!60!black] coordinates {(10,7.598931) (50,103.78951) (100,287.165813) (200,616.957911)};
%\addlegendentry{818k}
\end{axis}
\end{tikzpicture}
\end{subfigure}
% FLTrust Plot
\begin{subfigure}{0.24\textwidth}
\centering
\begin{tikzpicture}
\begin{axis}[
    width=\textwidth,
    height=4cm,
    title={FLTrust},
    xtick={10,50,100,200},
    xlabel={\#Clients},
    ymin=0,
    grid=major
]

\addplot+[mark=*, blue] coordinates {(10,0.459768) (50,2.2807) (100,4.51924) (200,9.0213)};
%\addlegendentry{62k}
\addplot+[mark=square*, red] coordinates {(10,1.06813) (50,5.38836) (100,10.7192) (200,21.8529)};
%\addlegendentry{273k}
%\addplot+[mark=square*] coordinates {(10,1.51459) (50,7.66901) (100,15.23) (200,30.776)};
%\addlegendentry{500k}
\addplot+[mark=triangle*, green!60!black] coordinates {(10,2.36229) (50,11.7428) (100,23.5983) (200,46.7419)};
%\addlegendentry{818k}
\end{axis}
\end{tikzpicture}
\end{subfigure}

\caption{Computation time vs. no. of clients (with 2 servers) for different Byzantine-robust aggregation rules across \textit{different} model sizes (and datasets). 62k (CIFAR-10 S - LeNet5), 273k (CIFAR-10 L - ResNet-18), 818k (Shakespeare - LSTM).}
\label{fig:agg-2}
\end{figure*}

\begin{table}[b]
\centering
\caption{Communication cost (in GB) vs. no. of clients with 2 servers (100k Params).}
\label{tab:comm-cost-2s}
\resizebox{0.5\textwidth}{!}{%
\begin{tabular}{|c|c|c|c|c|}
\hline
\textbf{\# Clients} & \textbf{$L_2$ norm} & \textbf{Krum} & \textbf{Trimmed-Mean} & \textbf{FLTrust} \\
\hline
10   & 0.005 & 0.110  & 0.061 & 0.026 \\\hline
50   & 0.005 & 3.220  & 0.824 & 0.132 \\\hline
100  & 0.005 & 13.030 & 2.257 & 0.263 \\\hline
200  & 0.005 & 26.180 & 5.924 & 0.527 \\\hline
\multicolumn{5}{l}{Only $L_2$ norm values change with increasing number of servers.}\\\hline
\end{tabular}
}
\end{table}

\begin{figure}[t!]
\centering

% --- Trim Attack Subfigure ---
\begin{subfigure}[t]{0.5\columnwidth}
\centering
\begin{tikzpicture}
\begin{axis}[
    width=\columnwidth,
    height=5cm,
    xlabel={Adversaries},
    ylabel={Accuracy (\%)},
    title={Trim Attack},
    xtick={1,2,3,4,5},
    xticklabels={0(P), 0, 10, 20, 40},
    ymin=87, ymax=91,
    legend pos=south west,
    legend style={font=\scriptsize},
    grid=major
]
\addplot[mark=*, blue] coordinates {
    (1, 90.31) (2, 89.91) (3, 89.80) (4, 89.10) (5, 88.87)
};
\addlegendentry{Krum}

\addplot[mark=square*, red] coordinates {
    (1, 90.35) (2, 90.12) (3, 90.10) (4, 90.00) (5, 89.30)
};
\addlegendentry{T-Mean}

\addplot[mark=triangle*, green!60!black] coordinates {
    (1, 90.45) (2, 90.23) (3, 90.10) (4, 89.45) (5, 88.90)
};
\addlegendentry{FLTrust}
\end{axis}
\end{tikzpicture}
\end{subfigure}%
\hfill
% --- Backdoor Attack Subfigure ---
\begin{subfigure}[t]{0.5\columnwidth}
\centering
\begin{tikzpicture}
\begin{axis}[
    width=\columnwidth,
    height=5cm,
    xlabel={Adversaries},
    title={Backdoor Attack},
    xtick={1,2,3,4,5},
    xticklabels={0(P), 0, 10, 20, 40},
    ymin=87, ymax=91,
    legend pos=south west,
    legend style={font=\scriptsize},
    grid=major
]
\addplot[mark=*, blue] coordinates {
    (1, 90.31) (2, 89.81) (3, 89.45) (4, 89.30) (5, 89.10)
};
%\addlegendentry{Krum}

\addplot[mark=square*, red] coordinates {
    (1, 90.35) (2, 90.12) (3, 89.34) (4, 89.10) (5, 88.90)
};
%\addlegendentry{T-Mean}

\addplot[mark=triangle*, green!60!black] coordinates {
    (1, 90.45) (2, 90.23) (3, 89.72) (4, 89.45) (5, 89.10)
};
%\addlegendentry{FLTrust}
\end{axis}
\end{tikzpicture}
\end{subfigure}

\caption{Accuracy under Trim and Backdoor Attacks for different aggregation rules (CIFAR-10 L - ResNet-18). P: Plaintext, Rest: Encrypted and simulated with 2 servers.}
\label{fig:attack-comparison}
\end{figure}

\begin{table*}[t!]
\centering
\caption{Computation and Communication Cost (in MB) for Different Methods}
\label{tab:comp-cost}
\resizebox{\textwidth}{!}{%
\begin{tabular}{|c|c|cc|cc|cc|cc|}
\hline
\textbf{\# Clients} & \textbf{\# Params} &
\multicolumn{2}{c|}{\textbf{Prio}} &
\multicolumn{2}{c|}{\textbf{ELSA}} &
\multicolumn{2}{c|}{\textbf{\sysname (2 Servers)}} &
\multicolumn{2}{c|}{\textbf{\sysname (4 Servers)}} \\
\cline{3-10}
& & \textbf{Client} & \textbf{Server} & \textbf{Client} & \textbf{Server} & \textbf{Client} & \textbf{Server} & \textbf{Client} & \textbf{Server} \\
\hline
50  & 100k & 14.3 (59.1)  & 23.3 (0.002)  & 4.6 (51.6)  & 2.7 (640)    & 0.182 (9.8)  & 0.76 (5)  & 0.188 (10)  & 0.71 (2.5) \\
100 & 100k & 14.8 (59.1)  & 48.9 (0.005)  & 7.1 (51.6)  & 3.8 (1280)   & 0.182 (10)  & 1.46 (5)  & 0.188 (10)  & 1.36 (2.5) \\
200 & 100k & 16.5 (59.1)  & 99.5 (0.010)  & 8.4 (51.6)  & 6.1 (2560)   & 0.182 (10)  & 2.90 (5)  & 0.188 (10)  & 2.68 (2.5) \\
\hline
50  & 500k & 63.6 (262.2) & 102.9 (0.002) & 17.5 (258.0) & 11.2 (3200) & 0.730 (20)  & 1.01 (10) & 0.742 (20)  & 0.93 (10) \\
100 & 500k & 67.7 (262.2) & 218.4 (0.005) & 23.2 (258.0) & 17.3 (6400) & 0.730 (20)  & 1.79 (10) & 0.742 (20)  & 1.69 (10) \\
200 & 500k & 78.3 (262.2) & 457.9 (0.010) & 38.0 (258.0) & 31.4 (12800) & 0.730 (20) & 3.52 (10) & 0.742 (20)  & 3.41 (10) \\
\hline
\multicolumn{10}{l}{Comparison of runtime (sec) and data sent (MB in parentheses; per client and per server) for relaxed $L_\infty$ defense.}\\\hline
\end{tabular}
}
\end{table*}

\begin{table}[t!]
\centering
\caption{Computation time (in seconds) and communication (in GB) for different model sizes (and datasets).}
\resizebox{0.5\textwidth}{!}{%
\begin{tabular}{|c|cc|cc|cc|}
\hline
\textbf{\#Params} & \multicolumn{2}{c|}{\textbf{RoFL}} & \multicolumn{2}{c|}{\textbf{ELSA}} & \multicolumn{2}{c|}{\textbf{\sysname}} \\
\cline{2-7}
         & Time & Comm & Time & Comm & Time & Comm \\
\hline
62k      & 278  & 0.8  & 1.9  & 0.9  & 0.655 & 0.001 \\
273k     & 2229 & 3.8  & 7.3  & 7.3  & 0.850 & 0.006 \\
818k     & 4742 & 11.4 & 18.1 & 18.1 & 1.059 & 0.016 \\
\hline
\end{tabular}
}
\label{tab:comparison}
\end{table}

\subsection{Results}

\subsubsection{Efficiency} Figure~\ref{fig:aggregation-times-1} presents a comparative analysis of computation time across four Byzantine-robust aggregation rules: $L_2$ Norm, Krum, Trimmed Mean, and FLTrust, with varying numbers of clients (10, 50, 100, 200) and server counts (2, 4, 10).

We observe that $L_2$ Norm exhibits the lowest and most stable computation time across all configurations, with only a modest increase as the client count grows, and minimal variation with server scaling. Krum incurs a high computational cost, especially with larger client counts, due to its pairwise distance computations. This cost is alleviated by increasing the number of servers. Trimmed Mean also benefits notably from increased server parallelism, suggesting good scalability. However, the costs of Krum and Trimmed Mean are inherently quadratic in the number of clients, since they compare all pairs of updates, whereas the costs of L2 norm and FLTrust are linear in the number of clients.

Figure~\ref{fig:agg-2} demonstrates the server-side computation time as the number of clients increases across models of varying sizes (different datasets, two servers). The $L_2$ Norm method, being the simplest, exhibits linear scaling and remains computationally efficient even with 200 clients and large models. As noted earlier, Krum is significantly more expensive due to pairwise distance calculations, resulting in prohibitive delays for large-scale deployments. Trimmed-Mean offers an acceptable cost along with scalability. FLTrust maintains scalability similar to that of the $L_2$ Norm. The costs of Krum and Trimmed Mean are inherently quadratic in the number of clients, since they compare all pairs of updates, whereas the costs of $L_2$ norm and FLTrust are linear in the number of clients.

Table \ref{tab:comm-cost-2s} presents the communication cost (in GB) incurred under different Byzantine-robust aggregation rules as the number of clients increases, assuming two servers and a fixed model size of 100k parameters. $L_2$ norm shows a constant communication cost of 0.005 GB, regardless of the number of clients. This is because it requires only aggregated statistics to be shared with the servers, making it highly communication-efficient. Krum's communication cost grows to over 26 GB for 200 clients because each client must exchange partial distance computations with all others. Trimmed-Mean scales more moderately than Krum, as it requires fewer exchanges, since we utilized the index pairs from Batcher’s odd-even mergesort network. FLTrust has a significantly lower cost than Krum and Trimmed-Mean. We refer to~\ref{app5} for the runtime breakdown.

\subsubsection{Byzantine Robustness} Figure \ref{fig:attack-comparison} demonstrates that all three aggregation rules (Krum, Trimmed-Mean, and FLTrust) maintain high accuracy even when clients’ updates are encrypted and subjected to increasing proportions of adversarial clients (performing Trim~\cite{fang2020local} and Backdoor~\cite{bagdasaryan2020backdoor} attacks).
With no attack (both in plaintext and encrypted cases), each method achieves around 90\% accuracy. As the adversarial fraction increases to 40\%, the accuracy drop is minimal. 
These results confirm that, despite the added encryption overhead and targeted attacks, the robust aggregation schemes continue to behave as intended when implemented within \sysname.

\subsubsection{Comparative Performance} Table~\ref{tab:comp-cost} compares the computation time per client and per server (in seconds) and the communication cost (in MB) between different methods. We assume a relaxed $L_\infty$ defense for comparison, because ELSA does not support complex defenses, whereas Prio needs adjustments to run an $L_2$ norm defense as well. Prio incurs the highest overhead, with per-client costs exceeding 78s and 262MB for larger models (500k parameters), and server communication scaling linearly with the number of clients. ELSA improves client-side computation but suffers from excessive server communication and computation, particularly at larger scales (e.g., 31.4s and 12.5GB at 200 clients).

\sysname significantly reduces overhead for both clients and servers. In all configurations, client computation remains less than 1s, and communication cost is bounded (10–20MB), making it suitable for deployment even on edge devices. Server-side costs scale modestly with the number of clients and remain well within practical limits, while communication costs are held constant by efficient encrypted aggregation. Notably, using 4 servers instead of 2 further balances the load without increasing overall communication. These results demonstrate that \sysname achieves scalability and practical efficiency, outperforming existing approaches. 

Table~\ref{tab:comparison} evaluates how computation time and communication scale with increasing model size across RoFL, ELSA, and \sysname. RoFL exhibits extremely high computation costs (up to 4742s) and communication overheads, making it impractical for large models. ELSA significantly reduces computation time compared to RoFL, but still incurs high communication costs that scale linearly with model size. In contrast, \sysname maintains low and stable computation times, even for the largest model and drastically reduces communication overhead, up to 17 MB. This demonstrates that \sysname is far more efficient and scalable than existing solutions, and also supports complex aggregation rules.
%\smallskip

\subsubsection{Verification Cost Analysis} Let \(N\) be the full update dimension and \(n = N/2\) the number of coordinates held by each server. For every coordinate \(\ell \in \{1, \dots, n\}\), the honest peer checks the proof returned by the other server. Table~\ref{tab:verify-cost} counts, \emph{per coordinate}, the two dominant operations:

\begin{itemize}
  \item \(\mathsf{Exp}\): a modular exponentiation in the source group (\(\mathbb{G}\) or \(\mathbb{G}_T\));
  \item \(\mathsf{Pair}\): a single evaluation of the bilinear pairing \(e\colon \mathbb{G} \times \mathbb{G} \to \mathbb{G}_T\).
\end{itemize}

\begin{table}[h!]
\centering
\caption{Verification cost per coordinate (the honest peer performs the same test for all \(n = N/m\) coordinates).}
\label{tab:verify-cost}
\setlength{\tabcolsep}{6pt}
\renewcommand{\arraystretch}{1.15}
\resizebox{0.5\textwidth}{!}{%
\begin{tabular}{@{}|l|c|c|@{}}
\hline
\textbf{Method} & \#\,\(\mathsf{Exp}\) / coordinate & \#\,\(\mathsf{Pair}\) / coordinate\\
\hline
Krum         & \(1\) \quad (\(e(g,g)^d\))        & \(1\) \quad (\(e(\cdot,\cdot)\)) \\\hline
Trimmed-Mean & \(1\) \quad (\(g^d\))             & \(0\) \\\hline
FLTrust     & \(2\) \quad (\(h_i^{u_t},\,g^d\)) & \(0\) \\
\hline
\end{tabular}
}
\end{table}

\paragraph*{Aggregate cost} Multiplying the per-coordinate counts by \(n\) yields the total work per round:
\[
\begin{aligned}
\text{Krum:}         & \quad n\,\mathsf{Exp} + n\,\mathsf{Pair}, \\
\text{Trimmed-Mean:} & \quad n\,\mathsf{Exp}, \\
\text{FLTrust:}     & \quad 2n\,\mathsf{Exp}.
\end{aligned}
\]

Because a single pairing is typically an order of magnitude slower than a modular exponentiation, Krum incurs the highest overhead
%; Trimmed-Mean is the cheapest, requiring only one exponentiation per coordinate; FLTrust sits in between with two exponentiations but no pairing. 
In all three cases, the verification effort scales linearly with the model dimension and is independent of the number of clients, ensuring scalability to high-dimensional updates.

\subsubsection{Limitations} 
\sysname is compatible with a broad class of Byzantine-robust aggregation schemes, where relevant structural properties such as ordering, sign, or magnitude ratios are preserved under multiplicative masking and the operations can also be distributed among multiple servers to jointly calculate the final value. This holds for distance-based, similarity-based, and sorting-based operations. However, there do exist some aggregation schemes, such as Min-Max and Min-Sum~\cite{shejwalkar2021manipulating}, that apply non-linear transformations to client updates or rely on adaptive thresholds that are themselves functions of the unmasked data, and thus cannot be straightforwardly instantiated in \sysname.

\section{Conclusion}
This work presented a practically efficient and provably secure federated learning framework \sysname based on multiparty key homomorphic encryption utilised by multiple servers. It is designed to address the challenges of data confidentiality and data aggregation integrity in the presence of malicious participants and servers. By distributing encrypted computation across multiple servers and introducing collaborative randomization to obfuscate sensitive intermediate values, \sysname achieves scalability without compromising privacy. The framework supports different Byzantine-robust aggregation mechanisms, such as Krum, Trimmed Mean, FLTrust and MESAS directly on encrypted inputs, demonstrating resilience against both inference and poisoning attacks. %Extensive evaluations confirm that our design achieves comparable performance, even under strong adversarial conditions, while incurring minimal cryptographic overhead. 
The ability to efficiently integrate verification checks, traditionally impractical in pure FHE settings, further underscores the deployability of our solution. 
%This work provides a principled foundation for secure, trustworthy, and scalable federated learning in adversarial and privacy-sensitive environments.

\section*{CRediT authorship contribution statement} Harsh Kasyap, Anil Kumar Pradhan, and Ugur Ilker Atmaca are responsible for the design of the scheme, the experiment, and the writing of the paper. Graham Cormode and Carsten Maple are responsible for the design of the scheme and editing of the paper.

\section*{Declaration of competing interest} The authors declare that they have no known competing financial interests or personal relationships that could have appeared to influence the work reported in this paper.

\section*{Data availability} Public data has been used.

\section*{Acknowledgment} This work is supported, in part, by the UKRI Prosperity Partnership Scheme (FAIR) under the EPSRC Grant EP/V056883/1; EP/R007195/1 (Academic Centre of Excellence in Cyber Security Research - University of Warwick); EP/N510129/1 (The Alan Turing Institute); the Bill and Melinda Gates Foundation [INV-001309]. Under the grant conditions of the Foundation, a Creative Commons Attribution 4.0 Generic License has already been assigned to the Author Accepted Manuscript version that might arise from this submission. The author gratefully acknowledges the support provided by the Department of Science and Technology (DST), Government of India, through the INSPIRE Faculty Fellowship scheme.

%\clearpage

% trigger a \newpage just before the given reference
% number - used to balance the columns on the last page
% adjust value as needed - may need to be readjusted if
% the document is modified later
%\IEEEtriggeratref{8}
% The "triggered" command can be changed if desired:
%\IEEEtriggercmd{\enlargethispage{-5in}}

% references section

% can use a bibliography generated by BibTeX as a .bbl file
% BibTeX documentation can be easily obtained at:
% http://mirror.ctan.org/biblio/bibtex/contrib/doc/
% The IEEEtran BibTeX style support page is at:
% http://www.michaelshell.org/tex/ieeetran/bibtex/
\bibliographystyle{elsarticle-num}
\bibliography{main}
% argument is your BibTeX string definitions and bibliography database(s)
%\bibliography{IEEEabrv,../bib/paper}
%
% <OR> manually copy in the resultant .bbl file
% set second argument of \begin to the number of references
% (used to reserve space for the reference number labels box)

%\clearpage

\appendix
\section{\sysname-enabled MESAS (Figure~\ref{fig:modular-fhe-mesas}).}\label{mesas}
MESAS is implemented under \sysname by privately computing all statistical distances from perturbed client updates and performing the clustering and pruning in the plaintext domain. Each server privately computes the following operations on encrypted, randomly masked client updates: cosine similarity, Euclidean distance, count, min-max, and variance. After collaborative decryption, the servers obtain only obfuscated metric values, which preserve relative ordering. MESAS then applies its statistical tests to identify significant deviations, followed by clustering to separate benign and malicious client updates.

\begin{figure}[!h]
\setlength{\fboxsep}{.9pt}
\begin{center}
\begin{tcolorbox}[enhanced,
   drop fuzzy shadow southwest, colframe=black,colback=white]
\footnotesize

\noindent\textbf{Input:} 
\begin{itemize}
    \item Each server $k$ receives a set of $n$ encrypted client updates $\{c_{i,k} = \Enc{r\cdot u_{i,k}}: i=1,2,\dots,n\}$, where $u_{ik}$ is $k^{th}$ subset of client update of client \textit{i}. %$f$ is the bound on the number of Byzantine clients. 
    \item Additionally receives encrypted norm and mean $\{e_{i} = \Enc{\frac{r}{\| u_{i}\|}}: i=1,2,\dots,n\}, \Enc{r\cdot\mu_i}.$ [Note - Both encrypted norm and mean can also be calculated privately at the server side.]
    \item Server Update Vector: $u_0, \|u_0\|$
\end{itemize}

\noindent\rule{\textwidth}{0.1pt}

\noindent\textbf{Server-Side Encrypted Operations}
\begin{enumerate}
\item \textit{Coordinate-wise products \textbf{[Cosine Similarity]}.}
For each client \(i\) 
\[
 \Encop_{\mathsf{pk}}(cos_i) \;=\;
  \frac{u_{0} \odot c_{i,k}}{\|u_0\| \cdot \Encop_{\mathsf{pk}}\left(\frac{r}{\|u_i\|}\right)} 
  % = \Encop_{\mathsf{pk}}(r \cdot u_{i,k} \odot u_0)
\]

\item \textit{Compute \textbf{(Euclidean distance)}}
\[
\Enc{euc_{i}} = (c_{i,k}-u_{0,k})^2.
\]
% Note that,
% \[
% \Enc{euc_{i}}=\Enc{(r \cdot u_{i,k}-r \cdot u_{0,k})^2}.
% \]

\item \textit{Find \textbf{(Pairwise distance - Min/Max/COUNT)}}
\[
\Enc{count_{i}} = (c_{i,k}-u_{0,k}).
\]

\item \textit{Calculate \textbf{(Variance)}}
\[
\Enc{var_{i}} = (c_{i,k}-r \cdot \mu_{i}).
\]

\item \textit{Collaborative decryption.}
Decrypt to obtain $cos_i$, $euc_i$, $count_{i}$, $var_{i}$.

\item []\textbf{Verification of Correct Computation.}
    % \item Server Publishes $g_0^{d_{i_l}}$ as a proof for the correct computation (for every coordinate \(\ell\)), for all  $d_i \in \{cos_i$, $euc_i$, $count_{i}$, $var_{i}\}$. 
    % \item Peer Server: For every \(\ell\) check
    %     $$h_{i,k_l}^{u_{t_l}} \stackrel{?}{=}  g^{d_{i_l}}.$$

    \item Servers publish $e(g_0,g_0)^{d_{ij,k_l}}$ as a proof for the correct computation, for all  $d_i \in \{cos_i$, $euc_i$, $count_{i}$, $var_{i}\}$.
    
    \item Peer Servers: For every coordinate \(\ell\) and pair \((i,j)\) verify,
    \[
      e(h_{ik_{\ell}}(h_{jk_{\ell}})^{-1},h_{ik_{\ell}}(h_{jk_{\ell}})^{-1}) \stackrel{?}{=} e(g_0,g_0)^{d_{ij,k_l}}.
    \]
    \item Abort on the first mismatch.

\noindent\rule{0.9\textwidth}{0.1pt}

\noindent\textbf{Plaintext operations}
\item \textbf{Euclidean and cosine distances:} Now, one of the servers (assuming at least one is honest) compute the total Euclidean and cosine distance. The distances are scaled up by $r$. However, they preserve order, which is sufficient.
\item \textbf{Min/Max/COUNT:} From all the decrypted count vectors, Min and Max can be calculated. Further, we can prepare a binary sign vector $sgn$, such that $sgn_l = \begin{cases}
    0,& \text{if } u_{i,k_l} \leq u_{0,k_l}\\
    1,              & \text{otherwise}
\end{cases}$. 
    Sum of sgn gives COUNT that how many parameter values have increased from the respective parameter of the previous global model.

    \item \textbf{Selection:} This step can be computed same as in plaintext, perform pruning based on clustering, and selecting the next global model update.
\end{enumerate}

\end{tcolorbox}
\end{center}
\caption{Instantiation of MESAS in \sysname} 
\label{fig:modular-fhe-mesas}
\end{figure}

\section{Input-Phase Confidentiality}\label{app1}
\begin{theorem}\label{thm:input-conf}
The pre-processing view \(\mathsf{view}^{\mathrm{pre}}_k\) held by any
single PPT server is computationally indistinguishable from a view
with uniformly random ciphertexts.  Consequently, the server learns
nothing about individual client updates beyond public-size metadata.
\end{theorem}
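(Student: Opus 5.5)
The plan is a standard hybrid argument that reduces the pseudorandomness of every ciphertext in \(\mathsf{view}^{\mathrm{pre}}_k\) to the decisional Ring–LWE assumption of Section~\ref{sec:mkfhe}. The key fact is the following. The corrupted server \(\mathcal S_k\) knows its own secret share \(s_k\) but not the share \(s_{k'}\) of the honest server. So from \(\mathcal S_k\)'s point of view the joint key \(\mathsf{pk}=(b,a)\) with \(b=\sum_j b_j\) carries the honest contribution \(b_{k'}=-a s_{k'}+e_{k'}\), which is an RLWE sample. We allow the adversary to control every share except the honest one, so \(\mathcal S_k\) may also be given \(s_j\) for all \(j\neq k'\).

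\textbf{Hybrid 0} is the real view \(\{c_{i,k},e_i,\Encop_{\mathsf{pk}}(r)\}\) together with the public parameters. \textbf{Hybrid 1} replaces \(b_{k'}\) by a uniform element of \(R_q\). A distinguisher between Hybrids 0 and 1 breaks decisional RLWE directly: the reduction embeds the challenge \((a,b^*)\) as the honest server's contribution and simulates all other shares itself. In Hybrid 1 the aggregate \(b\) is uniform, because a uniform summand makes the sum uniform.

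\textbf{Hybrids \(2,\dots,T+1\)} handle the \(T\) ciphertexts in the view one at a time, where \(T\) is polynomial (\(n\) clients times the number of packed ciphertexts, plus the \(e_i\) and \(\Encop_{\mathsf{pk}}(r)\)). Each step replaces a ciphertext \((c_0,c_1)=(bu+e_1+m,\;au+e_2)\) by a uniform pair. The justification is that \((a,au+e_2)\) and \((b,bu+e_1)\) are two RLWE samples under the fresh secret \(u\) with uniform public elements \(a,b\). By decisional RLWE, in its multi-sample form with \(u\) as the secret, the masks \((bu+e_1,\;au+e_2)\) are pseudorandom, and a pseudorandom mask hides the added plaintext \(m\). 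This holds whatever \(m\) is, including the homomorphically formed \(c_{i,k}=u_{i,k}\cdot\Encop_{\mathsf{pk}}(r)\). For those I would first argue that scalar multiplication of a uniform pair by a fixed nonzero plaintext polynomial, followed by the fresh re-randomisation that clients must add, stays indistinguishable from uniform. If re-randomisation is not explicitly part of the protocol, I would add it as a requirement. A union bound gives total advantage at most \((T+1)\cdot\mathrm{Adv}^{\mathrm{RLWE}}\), which is negligible, and the final hybrid is independent of all updates apart from vector length and client count, the ``public-size metadata''.

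The main obstacle is the commitment component \(h_{i,k_\ell}=g^{u_{i,k_\ell}}\), which is also part of \(\mathsf{view}^{\mathrm{pre}}_k\). It is deterministic and \(g=g_0^r\) is published, so DLP yields only one-wayness, not hiding: a server can test any guessed value \(u'\) by checking \(g^{u'}\stackrel{?}{=}h\), and low-entropy (quantised) gradient coordinates can be brute-forced. I would therefore prove the statement as written for the ciphertext part of the view. For the commitments I would either (i) add the explicit assumption that each committed coordinate has super-logarithmic min-entropy given the adversary's side information, and invoke DLP for one-wayness; or, preferably, (ii) replace \(h\) by a Pedersen commitment \(g^{u}h_0^{\rho}\), which is perfectly hiding and keeps the linear verification checks. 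In case (ii) the commitments are uniform group elements, and the hybrid argument above finishes the proof.
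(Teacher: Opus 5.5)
Your proof follows the paper's outline, but the paper's proof is only three sentences. It invokes IND-CPA (decisional Ring--LWE) for the $c_{i,k}$. It asserts that the commitments $h_{i,k_\ell}=g^{u_{i,k_\ell}}$ ``reveal no message information under DLP hardness''. It then has a simulator swap each ciphertext for an encryption of a uniform vector.

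The two places where you depart from it are exactly its two unjustified steps, and your objections are correct.
\begin{itemize}
\item DLP gives one-wayness, not hiding. With $g=g_0^{r}$ public, the server can test any guess via $g^{u'}\stackrel{?}{=}h$. Moreover, $h_{i,k_\ell}=h_{j,k_\ell}$ iff $u_{i,k_\ell}\equiv u_{j,k_\ell}\pmod p$. The paper's simulator keeps the real commitments and so needs the very updates it should hide; the ``learns nothing'' conclusion is false for the protocol as specified.
\item In Figure~\ref{fig:modular-fhe-framework}, $c_{i,k}=u_{i,k}\cdot\Enc{r}$ is a deterministic function of $u_{i,k}$ and the published $\Enc{r}$, so IND-CPA does not apply. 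The server can recompute $u'\cdot\Enc{r}$ for any guess. Before rescaling it can even read off the encoding of $u_{i,k}$ by dividing the second component of $c_{i,k}$ by that of $\Enc{r}$, which is a unit of $R_q$ with high probability. In the prose variant, where a fresh $\Enc{u_{i,k}}$ is multiplied by $\Enc{r}$, that fresh ciphertext already supplies the randomness.
\end{itemize}
So narrowing the claim is the right move. One caveat: your fallback (i) does not recover the stated conclusion. However much min-entropy the coordinates have, equalities and linear relations $h_{i,k_\ell}^{\alpha}h_{j,k_\ell}^{\beta}\stackrel{?}{=}h_{m,k_\ell}$ still leak. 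Only (ii), or removing the $h$'s from the view, yields indistinguishability.

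Your ciphertext hybrid itself still has two gaps.

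\textbf{Rogue key.} The step ``a uniform summand makes the sum uniform'' needs the adversary's key shares to be independent of $b_{k'}$. The paper's key generation is a plain broadcast, and you let the adversary control every other share. A rushing malicious server can therefore wait for $b_{k'}$ and broadcast $b_k=(-as^\star+e^\star)-\sum_{j\neq k}b_j$ for small $s^\star,e^\star$ of its choice. This makes $\mathsf{pk}=(-as^\star+e^\star,a)$ a key it can decrypt alone, and every $c_{i,k}$ is then in the clear. To close this you must do one of the following:
\begin{itemize}
\item restrict key generation to honest behaviour;
\item use commit-then-reveal for the key shares;
\item require proofs of knowledge of the $s_j$, from which your reduction extracts.
\end{itemize}

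\textbf{Relinearisation key.} The public parameters include the honest server's relinearisation material $\Enc{s_{k'}^{2}}$ and $\Enc{2s_{k'}}$. Your Hybrid~$0\to1$ reduction embeds the RLWE challenge as $b_{k'}$, so it does not know $s_{k'}$ and cannot produce these ciphertexts. You need an explicit circular-security (KDM) assumption, as every FHE scheme with relinearisation keys does.

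With these two additions plus mandatory re-randomisation, the ciphertext part of your argument goes through.
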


\begin{proof}
IND-CPA security (decisional Ring-LWE) makes
\((c_{i,1},c_{i,2})\) indistinguishable from encryptions of uniform
vectors.  Commitments \(h_{i,k_\ell}=g^{u_{i,k_\ell}}\) reveal no
message information under DLP hardness.  A simulator that replaces
each ciphertext by an encryption of a uniform vector produces a view
that is indistinguishable from the real execution, establishing the
claim.
\end{proof}

\section{Post-Decryption Privacy}\label{app2}

\begin{theorem}\label{thm:post-priv}
For any probabilistic polynomial-time adversary controlling a single
server, the distribution of \(\{d_i\}\) is computationally
indistinguishable from uniform; the adversary therefore gains no
non-negligible information about the underlying plaintext updates
beyond the masked vectors \(d_i\).
\end{theorem}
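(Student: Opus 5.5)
The plan is a hybrid argument that starts from the real post-decryption view of a single corrupted server $\mathcal S_k$ and ends in a view where every $d_i$ is replaced by a uniform element of $\mathbb Z_p$ (coordinate-wise). The argument rests on three facts established earlier in the paper:
\begin{itemize}
\item the global mask $r=r_1+r_2 \bmod p$ is generated jointly (Figure~\ref{fig:pub-params});
\item the honest server's share $r_{k'}$ is sampled uniformly and is only ever published as $\Encop_{\mathsf{pk}}(r_{k'})$ and $g_0^{r_{k'}}$;
\item the pre-decryption view is already simulatable (Theorem~\ref{thm:input-conf}).
\end{itemize}

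\textbf{Hybrid $H_0$.} This is the real $\mathsf{view}^{\mathrm{post}}_k$, including $\Encop_{\mathsf{pk}}(r)$, $g=g_0^r$ and the decrypted masked values $d_{i,\ell}=r\cdot s_{i,\ell}+\delta_{i,\ell}$. Here $s_{i,\ell}$ is the method-specific statistic (a difference, a squared difference, or an inner-product term), and $\delta_{i,\ell}$ is the additive blinding from the strengthened security model.

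\textbf{Hybrid $H_1$.} Replace the honest server's ciphertext $\Encop_{\mathsf{pk}}(r_{k'})$, and every client ciphertext, by encryptions of fixed dummy values. Indistinguishability from $H_0$ follows from IND-CPA security of the multi-key scheme under decisional Ring-LWE, exactly as in Theorem~\ref{thm:input-conf}. Threshold decryption cannot be run by the adversary alone, since the honest server's partial key $s_{k'}$ is required.

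\textbf{Hybrid $H_2$.} Argue that the commitment generator $g=g_0^{r}$ and the commitments $h_{i,k_\ell}$ do not help recover $r$, since this would require a discrete logarithm. Hence from the adversary's perspective $r=r_k+r_{k'}$ is uniform in $\mathbb Z_p$ given its own share $r_k$.

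\textbf{Hybrid $H_3$.} This is the information-theoretic core. Fix the statistics $s_{i,\ell}$. For a single coordinate with $s_{i,\ell}\neq 0$, multiplication by a uniform $r\in\mathbb Z_p^{*}$ is a bijection, so $r\,s_{i,\ell}$ is uniform on $\mathbb Z_p^{*}$. Adding $\delta_{i,\ell}$ keeps it uniform, with a statistical distance of at most $1/p$ from uniform on $\mathbb Z_p$.

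To pass from single coordinates to the whole family $\{d_i\}$, use the fact that each $\delta_{i,\ell}$ is sampled independently per statistic from a range of size $B>2^{\lambda}$. Then argue that, conditioned on $r$, the vector $(d_{i,\ell})_{i,\ell}$ is the image of independent blinding terms. Any distinguisher must therefore extract $r$ from some linear relation among the $d_{i,\ell}$, and brute-forcing $r$ through the blinding costs at least $B>2^{\lambda}$ guesses. This yields a negligible advantage. Finally, the freshness of $r$ in every round prevents accumulation across rounds, giving the per-round bound.

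\textbf{Main obstacle.} I expect the hardest step to be the passage in $H_3$ from per-coordinate uniformity to joint uniformity. All $d_i$ share the same $r$, so ratios $d_{i,\ell}/d_{j,\ell}$ are fixed by the plaintexts. The argument has to lean entirely on the independent additive blinding $\delta_{i,\ell}$ destroying those ratios up to the $B>2^{\lambda}$ search bound.

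My first attempt would be a reduction. Given a distinguisher for joint $\{d_i\}$ versus uniform, build an algorithm that recovers $r$ with non-negligible probability by testing candidate ratios. Then bound its success by the number of blinding values, $B^{-1}\le 2^{-\lambda}$.

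If that reduction fails, I would treat the joint statement computationally rather than statistically. In that route, replace the shared mask by independent per-client masks in one further hybrid, and charge the gap to the blinding bound.
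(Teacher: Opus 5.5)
\textbf{The gap.} The gap is in $\mathcal H_3$, and it cannot be repaired, because joint indistinguishability of $\{d_i\}$ from uniform is false.

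Your per-coordinate step is the same as the paper's: a uniform mask times a fixed nonzero statistic is uniform. But every decrypted value carries the same $r$. So in the protocol as specified, the family satisfies exact relations that a uniform family meets only with probability about $1/p$:
\begin{itemize}
\item Krum decrypts both ordered pairs, and $d_{ij,k}=d_{ji,k}$;
\item Trimmed-Mean has $d_{ji,k}=-d_{ij,k}$ and $d_{ij,k}+d_{jl,k}=d_{il,k}$;
\item any two masked statistics satisfy $d_a s_b=d_b s_a$.
\end{itemize}
Even marginally, your template $d=r\cdot s$ misdescribes Krum, whose mask is $r^2$. The value $r^2\Delta^2$ is always a quadratic residue, which Euler's criterion detects with advantage about $1/2$; the paper itself only gets uniformity on the residues there.

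Most basically, the protocol needs the decrypted values to reveal order. Trimmed-Mean reads the sign of $d_{ij,k}$, Krum ranks scores, and FLTrust thresholds at $0$. Values that reveal the order of the plaintext statistics are not plaintext-independent.

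The paper's proof has the same hole you identified. Its $\mathcal H_1$ merely re-samples the mask and still depends on $s_i$ through $f(r',s_i)$. Plaintext-independence is argued only coordinate by coordinate, after which the proof simply asserts that the simulator outputs independent random vectors.

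\textbf{Why the additive blinding does not help.} The blinding cannot close this, because your $\mathcal H_3$ mixes two incompatible models:
\begin{itemize}
\item The bijection argument needs $r$ uniform in $\mathbb Z_p^{*}$. Under that model, reduction mod $p$ scrambles the order of the values $r\cdot s$.
\item The constraint $B<r\varepsilon/2$ only makes sense when $r\cdot s$ does not wrap around. That means the masked values are order-preserving, and hence leak the order and, up to relative error about $B/(r|s|)$, the ratios.
\end{itemize}
Your $B>2^{\lambda}$ bound concerns recovering $r$, but a distinguisher need not recover $r$. In the regime $B$ is designed for, comparing two masked values already reveals the order of the underlying statistics.

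The $\delta$'s are also absent from what is verified. The checks $h_{i,k_\ell}h_{j,k_\ell}^{-1}=g_0^{d_{ij,k_\ell}}$, their Krum and FLTrust analogues, and Theorem~\ref{thm:integrity} all require $d=r\cdot s$ exactly. So your $\mathcal H_0$ is not the verified protocol.

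\textbf{The commitment step.} The transition $\mathcal H_2\to\mathcal H_3$ fails on its own. You keep $g=g_0^{r}$ and the commitments in the view. For the protocol that is actually verified, the public check above holds for the real $d$ but fails for uniform $d$ with probability $1-1/p$. DL hardness makes $r$ hard to compute, not pseudorandom. Moreover, $h_{i,k_\ell}=g^{u_{i,k_\ell}}$ is deterministic with $g$ public, so anyone can test a candidate update directly. No hybrid that retains the commitments ends in a plaintext-independent view.

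\textbf{What can be proved.} This line of argument can honestly establish a leakage statement. A simulator given only the ratios of the statistics reproduces the distribution of $\{d_i\}$; in the order-preserving reading the protocol relies on, it must also be given their signs and order. Indistinguishability from uniform is not provable.
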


\begin{proof}
The global mask \(r\overset{\$}{\leftarrow}\mathbb Z_p^{\times}\)
is from a uniform distribution and remains secret from every individual
server.  Fix an arbitrary probabilistic polynomial-time adversary
\(\mathcal A\) that corrupts \(\mathcal S_1\) and let
\(\mathsf{view}^{\mathrm{post}}_1\) denote the adversary’s entire
view, including the set
\( \{d_i\}_{i=1}^{n} \) obtained after collaborative decryption.

\smallskip
\textbf{Hybrid argument.}
We build a sequence of hybrids
\(\mathcal H_0,\mathcal H_1\) and show that
\(\mathcal A\) cannot distinguish them.

\begin{itemize}
\item [\(\mathcal H_0\)] The \emph{real} execution outputting the true
      masked statistics \(d_i\) computed as specified below.

\item [\(\mathcal H_1\)]  Identical to \(\mathcal H_0\) except that the
      simulator samples a fresh
      \(r'\overset{\$}{\leftarrow}\mathbb Z_p^{\times}\)
      independent of all protocol messages and replaces
      every \(d_i\) by
      \(d_i' \gets f(r',s_i)\), where the masking function
      \(f\) is chosen per aggregation rule:
      \[
        f(r',s_i)=
        \begin{cases}
          r'^2\,s_i & \text{Krum} \\[4pt]
          r'\,s_i   & \text{Trimmed-Mean} \\[4pt]
          r'\,u_t\odot u_{i,k} & \text{FLTrust}. 
        \end{cases}
      \]
\end{itemize}

\smallskip
\textbf{Indistinguishability of hybrids.}
For each rule we argue that
\(\mathsf{view}^{\mathrm{post}}_1(\mathcal H_0)\)
and
\(\mathsf{view}^{\mathrm{post}}_1(\mathcal H_1)\)
are computationally indistinguishable:

\begin{description}

\item[Krum.]
      Coordinates take the form
      \(d_{ij,k_j}=r^{2}\,\Delta_{ij,k_j}^{2}\)
      with \(\Delta_{ij,k_j}\neq 0\) except with negligible
      probability.  Because \(r\) is uniform non-zero, \(r^{2}\)
      is uniform over the quadratic residues modulo~\(p\).
      The set of squares forms a subgroup of index~2, whence
      \(r^{2}\,\Delta^{2}\) is uniform over the same subgroup in
      both hybrids; the distributions coincide.
\item[Trimmed-Mean.]
      For any fixed non-zero coordinate
      \(s_{i,j}=u_{i,k_j}-u_{j,k_j}\),
      the product \(r\,s_{i,j}\) is uniform in \(\mathbb Z_p\)
      because \(r\) is uniform in
      \(\mathbb Z_p^{\times}\).
      When \(s_{i,j}=0\) the value is identically~0 in both
      hybrids, which occurs with probability $1/p$ and is therefore
      negligible.  Thus every coordinate of \(d_i\) is identically
      distributed in the two hybrids.

\item[FLTrust.]
      Each coordinate is
      \(d_{i,j}=r\,u_{t,j}\,u_{i,k_j}\).
      The known constant \(u_{t,j}\) is non-zero with overwhelming
      probability, so \(r\,u_{t,j}\) is again uniform over
      \(\mathbb Z_p^{\times}\).
      Multiplying by \(u_{i,k_j}\) therefore yields a uniform
      coordinate unless \(u_{i,k_j}=0\), which happens with
      negligible probability.

\end{description}

\smallskip\textbf{Simulation.}
In \(\mathcal H_1\) the simulator need not know the
plaintext updates: it draws \(r'\) and outputs independent random
vectors of the correct dimensionality.
Therefore \(\mathcal H_1\) can be generated without access to any
client secret, and
\(\mathcal A\)’s distinguishing advantage between
\(\mathcal H_0\) and \(\mathcal H_1\) is negligible.

\smallskip\textbf{Conclusion.}
Since \(\mathcal H_1\) is independent of the underlying plaintext
updates, the adversary extracts no non-negligible information from
\(\{d_i\}\).
\end{proof}

\section{Aggregation integrity}\label{app3}

\begin{figure*}[!htbp]
\centering
\begin{subfigure}{0.24\textwidth}
\centering
\begin{tikzpicture}
\begin{axis}[
    xbar stacked,
    xmin=0,
    width=\textwidth,
    height=4cm,
    bar width=7pt,
    xlabel={Time (s)},
    ytick=data,
    yticklabels={10S, 4S, 2S},
    ylabel={Clients},
    title={$L_2$ norm},
    enlarge y limits=0.2,
    legend style={font=\small, at={(0.5,-0.15)}, anchor=north, legend columns=3},
    xtick style={draw=none},
    grid=major
]
\addplot+[xbar, fill=cyan]     coordinates {(0.252,0) (0.188,1) (0.182,2) };
\addplot+[xbar, fill=magenta]   coordinates {(0.639057,0) (0.669384,1) (0.706781,2) };
\addplot+[xbar, fill=teal]   coordinates {(0.01386,0) (0.027678,1) (0.055314,2) };
\legend{Client, Server-Enc, Server-Plain}
\end{axis}
\end{tikzpicture}
\end{subfigure}
\begin{subfigure}{0.24\textwidth}
\centering
\begin{tikzpicture}
\begin{axis}[
    xbar stacked,
    xmin=0,
    width=\textwidth,
    height=4cm,
    bar width=7pt,
    xlabel={Time (s)},
    ytick=data,
    yticklabels={10S, 4S, 2S},
    ylabel={},
    title={Krum},
    enlarge y limits=0.2,
    xtick style={draw=none},
    grid=major
]
\addplot+[xbar, fill=cyan]     coordinates {(0.252,0) (0.188,1) (0.182,2) };
\addplot+[xbar, fill=magenta]   coordinates {(4.5825,0) (5.87954,1) (8.4605,2)};
\addplot+[xbar, fill=teal]   coordinates {(3.15593,0) (2.02654,1) (1.43946,2)};
\end{axis}
\end{tikzpicture}
\end{subfigure}
\begin{subfigure}{0.24\textwidth}
\centering
\begin{tikzpicture}
\begin{axis}[
    xbar stacked,
    xmin=0,
    width=\textwidth,
    height=4cm,
    bar width=7pt,
    xlabel={Time (s)},
    ytick=data,
    yticklabels={10S, 4S, 2S},
    ylabel={Clients},
    title={Trimmed Mean},
    enlarge y limits=0.2,
    xtick style={draw=none},
    grid=major
]
\addplot+[xbar, fill=cyan]     coordinates {(0.252,0) (0.188,1) (0.182,2) };
\addplot+[xbar, fill=magenta]   coordinates {(14.3394,0) (28.9327,1) (55.12365,2) };
\addplot+[xbar, fill=teal]   coordinates {(0.0220979000000003,0) (0.0441077000000014,1) (0.0899727999999982,2)};
\end{axis}
\end{tikzpicture}
\end{subfigure}
\begin{subfigure}{0.24\textwidth}
\centering
\begin{tikzpicture}
\begin{axis}[
    xbar stacked,
    xmin=0,
    width=\textwidth,
    height=4cm,
    bar width=7pt,
    xlabel={Time (s)},
    ytick=data,
    yticklabels={10S, 4S, 2S},
    ylabel={},
    title={FLTrust},
    enlarge y limits=0.2,
    xtick style={draw=none},
    grid=major
]
\addplot+[xbar, fill=cyan]     coordinates {(0.252,0) (0.188,1) (0.182,2) };
\addplot+[xbar, fill=magenta]   coordinates {(1.58902,0) (1.70594,1) (1.92367,2)};
\addplot+[xbar, fill=teal]   coordinates {(0.696,0) (1.32137,1) (2.60926,2)};
\end{axis}
\end{tikzpicture}
\end{subfigure}

\caption{Runtime breakdown across different Byzantine-robust aggregation methods and servers (50 clients, 100k params).}
\label{fig:byzantine_time_breakdown1}
\end{figure*}

\begin{figure*}[!htbp]
\centering

% $L_2$ norm
\begin{subfigure}{0.24\textwidth}
\centering
\begin{tikzpicture}
\begin{axis}[
    xbar stacked,
    xmin=0,
    width=\textwidth,
    height=4cm,
    bar width=7pt,
    xlabel={Time (s)},
    ytick=data,
    yticklabels={10S, 4S, 2S},
    ylabel={Clients},
    title={$L_2$ norm},
    enlarge y limits=0.2,
    legend style={font=\small, at={(0.5,-0.15)}, anchor=north, legend columns=3},
    xtick style={draw=none},
    grid=major
]
\addplot+[xbar, fill=cyan] coordinates {(0.252,0) (0.188,1) (0.182,2)};
\addplot+[xbar, fill=magenta] coordinates {(1.30334,0) (1.3279,1) (1.41108,2)};
\addplot+[xbar, fill=teal] coordinates {(0.01404,0) (0.02803,1) (0.05531,2)};
\legend{Client, Server-Enc, Server-Plain}
\end{axis}
\end{tikzpicture}
\end{subfigure}
%
% Krum
\begin{subfigure}{0.24\textwidth}
\centering
\begin{tikzpicture}
\begin{axis}[
    xbar stacked,
    xmin=0,
    width=\textwidth,
    height=4cm,
    bar width=7pt,
    xlabel={Time (s)},
    ytick=data,
    yticklabels={10S, 4S, 2S},
    ylabel={},
    title={Krum},
    enlarge y limits=0.2,
    xtick style={draw=none},
    grid=major
]
\addplot+[xbar, fill=cyan] coordinates {(0.252,0) (0.188,1) (0.182,2)};
\addplot+[xbar, fill=magenta] coordinates {(20.7826,0) (26.4765,1) (36.4155,2)};
\addplot+[xbar, fill=teal] coordinates {(102.9784,0) (127.5765,1) (128.0855,2)};
\end{axis}
\end{tikzpicture}
\end{subfigure}
%
% Trimmed Mean
\begin{subfigure}{0.24\textwidth}
\centering
\begin{tikzpicture}
\begin{axis}[
    xbar stacked,
    xmin=0,
    width=\textwidth,
    height=4cm,
    bar width=7pt,
    xlabel={Time (s)},
    ytick=data,
    yticklabels={10S, 4S, 2S},
    ylabel={Clients},
    title={Trimmed Mean},
    enlarge y limits=0.2,
    xtick style={draw=none},
    grid=major
]
\addplot+[xbar, fill=cyan] coordinates {(0.252,0) (0.188,1) (0.182,2)};
\addplot+[xbar, fill=magenta] coordinates {(38.8388,0) (81.2373,1) (161.47628,2)};
\addplot+[xbar, fill=teal] coordinates {(0.0507205,0) (0.062934,1) (0.124963,2)};
\end{axis}
\end{tikzpicture}
\end{subfigure}
%
% FLTrust
\begin{subfigure}{0.24\textwidth}
\centering
\begin{tikzpicture}
\begin{axis}[
    xbar stacked,
    xmin=0,
    width=\textwidth,
    height=4cm,
    bar width=7pt,
    xlabel={Time (s)},
    ytick=data,
    yticklabels={10S, 4S, 2S},
    ylabel={},
    title={FLTrust},
    enlarge y limits=0.2,
    xtick style={draw=none},
    grid=major
]
\addplot+[xbar, fill=cyan] coordinates {(0.252,0) (0.188,1) (0.182,2)};
\addplot+[xbar, fill=magenta] coordinates {(3.19113,0) (3.34155,1) (3.62663,2)};
\addplot+[xbar, fill=teal] coordinates {(1.34383,0) (2.7569,1) (5.39441,2)};
\end{axis}
\end{tikzpicture}
\end{subfigure}

\caption{Runtime breakdown across different Byzantine-robust aggregation methods and server (100 clients, 100k params).}
\label{fig:byzantine_time_breakdown2}
\end{figure*}

\begin{theorem}\label{thm:integrity}
Any probabilistic polynomial-time (PPT) adversary \(\mathcal A\) that
corrupts a single aggregation server,   can convince the honest
server to accept an aggregate vector \(d'_{i,j_k} \neq d_{i,j_k}\)
(where \( d_{i,j_k}\) is the actual computed intermediate aggregation vector \textsc{Krum}, Trimmed-Mean, or FLTrust as applicable)
only with negligible probability.
\end{theorem}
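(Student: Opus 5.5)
The plan is a direct reduction: every acceptance test in Section~\ref{sec:security} is an equality between group elements that the honest server computes itself from public data. On one side are the client commitments $h_{i,k_\ell}=g^{u_{i,k_\ell}}$, which the adversary cannot alter. On the other side is a value the honest server recomputes from the claimed plaintext $d'$. The first step is to fix the verification semantics so that this holds: the honest peer does not trust a published proof element, it evaluates $g^{d'_{\ell}}$ (respectively $e(g,g)^{d'_{\ell}}$) itself from the reported coordinate. Call the true intermediate value $d$; by correctness of the protocol (the bilinearity calculation given for Krum, and the analogous group identities for Trimmed-Mean and FLTrust), $d$ passes the check. The adversary wins only if it submits $d'\neq d \pmod p$ on some coordinate $\ell$ and that coordinate still passes.

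The second step is a case split by rule, showing that acceptance forces $d'_\ell\equiv d_\ell \pmod p$.
\begin{itemize}
\item \textbf{Trimmed-Mean.} Acceptance means $h_{i,k_\ell}h_{j,k_\ell}^{-1}=g^{d'_\ell}$. The left side equals $g^{d_\ell}$, so $g^{d'_\ell-d_\ell}=1$.
\item \textbf{FLTrust.} Acceptance means $h_{i,k_\ell}^{u_{t,\ell}}=g^{d'_\ell}$. The left side is $g^{u_{t,\ell}u_{i,k_\ell}}=g^{d_\ell}$, so again $g^{d'_\ell-d_\ell}=1$.
\item \textbf{Krum.} Acceptance means $e(g,g)^{d'_\ell}=e(g,g)^{d_\ell}$, where the left-hand pairing has been evaluated on the commitments.
\end{itemize}
Because $\mathbb G$ has prime order $p$, $g=g_0^r$ with $r\neq 0$ is a generator. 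Non-degeneracy of $e$ then makes $e(g,g)$ a generator of $\mathbb G_T$, which also has order $p$. In every case we get $d'_\ell\equiv d_\ell$. So, read literally, the soundness error is zero rather than merely negligible, provided values are interpreted in $\mathbb Z_p$. The "negligible" in the statement is needed only for the residual events handled next.

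The third step, and the main obstacle, is making this airtight given how commitments and $g$ are formed. There are two gaps.
\begin{enumerate}
\item The adversary helped generate $g=g_0^{r_1+r_2}$ and may have chosen $r_k$ adaptively. I would argue that the honest server's share $r_{k'}$ is uniform, so $g=1$ (i.e.\ $r\equiv 0$) happens with probability $1/p$.
\item A corrupted server might collude with a malicious client who published commitments inconsistent with its ciphertext. Here I would restrict the claim to the intended statement: acceptance binds $d'$ to the value determined by the published commitments, and that value equals the honest computation on the committed inputs.
\end{enumerate}
The remaining worry is rounding: real-valued updates must be embedded in $\mathbb Z_p$ via fixed-point encoding. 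I would state this as an assumption, so that congruence mod $p$ implies equality of the encoded values, noting that $p$ is 256-bit and values are bounded.

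Finally, I would present the DLP reduction to match the paper's framing. Suppose $\mathcal A$ could produce distinct $d'\neq d$ that both verify against the same commitment, or could open one commitment in two ways. That yields $g^{x}=g^{x'}$ with $x\neq x'$, which together with the collision extracts $\log_{g_0}$ of a challenge element embedded as a commitment. So any non-negligible success probability of $\mathcal A$ gives a DLP solver. Combining this bound with the $1/p$ event for a degenerate $g$ yields a negligible total.
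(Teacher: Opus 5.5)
Your argument has the right shape, but the identities in your second step use the wrong base. That error is what creates the degenerate-$g$ issue, which you then patch incorrectly. Since $h_{i,k_\ell}=g^{u_{i,k_\ell}}=g_0^{r u_{i,k_\ell}}$, and the decrypted value is $d_\ell=r\Delta_\ell$ (Trimmed-Mean), $r\,u_{t,\ell}u_{i,k_\ell}$ (FLTrust) or $r^2\Delta_\ell^2$ (Krum), the left-hand sides equal $g_0^{d_\ell}$ and $e(g_0,g_0)^{d_\ell}$, not $g^{d_\ell}$ and $e(g,g)^{d_\ell}$. You copied base $g$ from the display in Section~\ref{sec:security}, which is inconsistent with the protocol. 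With that base, the honest value fails the test unless $r$ is trivial (e.g.\ $r\in\{0,1\}$), and acceptance pins $d'_\ell$ to the \emph{unmasked} statistic ($\Delta_\ell$, $u_{t,\ell}u_{i,k_\ell}$, $\Delta_\ell^2$) rather than to $d_\ell$. Worse, your claim that $g=1$ occurs only with probability $1/p$ is false. A rushing corrupt server that sees $(c_2,g_2)$ can publish $c_1=-c_2$ (re-randomised with an encryption of $0$) and $g_1=g_2^{-1}$. This forces $r=0$ and $g=1$, after which every base-$g$ test reads $1=1$ and accepts any $d'$. The repair is to have the verifier recompute $g_0^{d'_\ell}$ and $e(g_0,g_0)^{d'_\ell}$, as in the Krum and Trimmed-Mean figures and in the table of the paper's own proof. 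Since $g_0$ is a fixed public generator, no degenerate case remains: if $r=0$ then $d=0$ and the test forces $d'=0$. Your first ``gap'' therefore disappears rather than needing a probability bound, and your injectivity argument goes through verbatim.

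With that correction, your route genuinely differs from the paper's. The paper argues by reduction. It embeds a DL challenge $g_0^{x}$ into the proof element for a random coordinate $\ell^\star$. It then claims that a passing forgery at $\ell^\star$ yields $x$, either by dividing out $r\Delta$, $r^2\Delta^2$ or $r\,u_{t,\ell^\star}$, or by reading $x$ off the decrypted $d'$, for success probability about $\varepsilon/N$. You instead observe that the test compares a quantity fixed by the honest commitments with $g_0^{d'_\ell}$, which the honest server computes from the plaintext it sees after decryption. Acceptance then forces $d'_\ell\equiv d_\ell \pmod p$ by injectivity of exponentiation in a prime-order group and non-degeneracy of $e$. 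This buys several things:
\begin{itemize}
\item perfect soundness, with no hardness assumption and no $1/N$ loss;
\item no reliance on the delicate points of the paper's reduction, which cannot dictate a proof element that the adversary itself outputs and needs $r$ to divide out the multiplier;
\item an explicit statement that a published proof element binds nothing on its own, since $h_{i,k_\ell}h_{j,k_\ell}^{-1}$ is publicly computable.
\end{itemize}
It also shows that your closing DLP paragraph is vacuous. Single-base commitments admit no collision $g_0^{x}=g_0^{x'}$ with $x\not\equiv x' \pmod p$, so there is nothing to extract, and you can drop it; DLP matters for hiding, not for this theorem. Your explicit caveats about fixed-point encoding into $\mathbb Z_p$, and about clients whose commitments disagree with their ciphertexts, are left implicit in the paper. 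Keeping them makes precise what is actually proved.
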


\begin{proof}%[Long Version]
Assume a PPT adversary~$\mathcal A$ controls server~$\mathcal S_1$ and causes the honest server~$\mathcal S_2$ to accept an incorrect vector $d'\neq d$ with non-negligible probability
$\varepsilon(\lambda)$.  We build a reduction~$\mathcal B$ breaking the discrete logarithm (DL) problem in~$\mathbb G$.

\smallskip\noindent
\textbf{Hybrid embedding.}
Given a DL instance $(g_0,g_0^{x})$, $\mathcal B$ selects a
random coordinate $\ell^{\star}$, embeds $g_0^{x}$ into the proof
element that $\mathcal S_1$ must output for that coordinate, and
generates all other protocol values honestly.  Specifically, the
expected proof in each method is replaced as follows:

\begin{center}
\resizebox{0.5\textwidth}{!}{%
\begin{tabular}{@{}|l|l|l|@{}}
\hline
Rule & Correct proof for coord.\,$\ell^{\star}$ & Challenge \\ \hline
Krum          & $e(g_0,g_0)^{d_{\ell^{\star}}}$ & $e(g_0,g_0)^{x}$ \\
Trimmed-Mean  & $g_0^{d_{\ell^{\star}}}$        & $g_0^{x}$        \\
FLTrust      & $g_0^{d_{\ell^{\star}}}$        & $g_0^{x}$        \\ \hline
\end{tabular}
}
\end{center}

\smallskip\noindent
\textbf{Forgery implies DL solution.}
If $\mathcal A$ forges a vector $d'$ that differs from $d$ in
$\ell^{\star}$ yet passes the verification test, the following
relations hold:

\[
\begin{aligned}
\text{Krum:}\;&
e(h_i h_j^{-1},h_i h_j^{-1})=e(g_0,g_0)^{x},\\
\text{Trimmed:}\;&
h_i h_j^{-1}=g_0^{x},\\
\text{FLTrust:}\;&
h_i^{u_{t,\ell^{\star}}}=g_0^{x},
\end{aligned}
\]
where $h_i=g^{u_{i,\ell^{\star}}}=g_0^{r u_{i,\ell^{\star}}}$.
Bilinearity (or plain exponent algebra) yields
\(
  x=
  \begin{cases}
    r^{2}\Delta^{2}&\text{Krum}\\
    r\Delta        &\text{Trimmed-Mean}\\
    r\,u_{t,\ell^{\star}} &\text{FLTrust},
  \end{cases}
\)
with $\Delta=u_{i,\ell^{\star}}-u_{j,\ell^{\star}}\neq0$ except with
probability $1/p$.  Because $\Delta$ and $u_{t,\ell^{\star}}$ are
non-zero and publicly known, $\mathcal B$ inverts the multiplicative
constant and outputs~$x$, solving DL whenever the forgery occurs.

\smallskip\noindent
\textbf{Extraction via the decrypted value \boldmath{$d_i$}.}
After the collaborative decryption step each intermediate statistic
\(d_{ij,k_\ell}\) is revealed to both servers in \emph{plaintext}
form.  Hence the reduction \(\mathcal B\) can read the forged entry
\(d'_{ij,k_{\ell^\star}}\) directly.  
From the verification equation that \(\mathcal A\) caused to hold we
obtain, for the special coordinate \(\ell^\star\),

\[
  x=
  \begin{cases}
    d'_{ij,k_{\ell^\star}}          &\text{Krum, since } 
         d'_{ij,k_{\ell^\star}}=r^{2}\Delta^{2},\\[4pt]
    d'_{ij,k_{\ell^\star}}          &\text{Trimmed-Mean, where } 
         d'_{ij,k_{\ell^\star}}=r\Delta,\\[4pt]
    \displaystyle
    \frac{d'_{i,\ell^\star}}{u_{t,\ell^\star}}
                                    &\text{FLTrust, with } 
         d'_{i,\ell^\star}=r\,u_{t,\ell^\star}.
  \end{cases}
\]

Here \(\Delta=u_{i,\ell^\star}-u_{j,\ell^\star}\neq0\) except with
probability \(1/p\).
Because \(r\) is uniform in \(\mathbb Z_p^{\times}\) and never
revealed, \(d'_{ij,k_{\ell^\star}}\) is a non-zero multiple of \(x\)
chosen uniformly at random, and the multiplier
(\(r^{2}\Delta^{2}\), \(r\Delta\), or \(r\,u_{t,\ell^\star}\))
is invertible with overwhelming probability.
\(\mathcal B\) therefore recovers \(x\) by a single modular
division, solving the discrete-log instance whenever the forgery
occurs.

\smallskip\noindent
\textbf{Success probability.}
The reduction fails only when the chosen coordinate is
\(\Delta=0\) (or \(u_{t,\ell^\star}=0\) for FLTrust), an event of
probability at most \(1/p\). Thus \(\mathcal B\) solves the DL with probability at least
\(\varepsilon(\lambda)/N-negl(\lambda)\), contradicting DL hardness. Consequently, a corrupt server can cause acceptance of
\(d'_{ij,k}\neq d_{ij,k}\) only with negligible probability.

\end{proof}

\section{Runtime Breakdown}\label{app5}
Figures~\ref{fig:byzantine_time_breakdown1} and~\ref{fig:byzantine_time_breakdown2} present a runtime breakdown of server (Plaintext vs. Encrypted) and client in \sysname-tailored schemes, with different numbers of server (2, 4 and 10) and client (50 and 100 clients). Each bar is divided into three components: (i) Client computation, (ii) Encrypted server-side computation, and (iii) Plaintext (or Plaintext-Ciphertext) server-side computation.

Client Computation is consistently lightweight. For all aggregation methods and client counts, the client-side computation remains at 0.18s with 2 servers. It is 0.25s for 10 servers, because the client needs to encrypt more packets. Overall, the client-side overhead does not scale significantly with the number of clients or the choice of aggregation. For 100 clients, Krum requires more than 100s on the server, indicating a drastic increase in encrypted computation due to its pairwise distance computations. For Trimmed-mean, encrypted computation time increases with the number of clients, rising from 14s (50 clients) to 161s (100 clients), because the number of pairwise comparisons increases. 
However, Trimmed-mean improves by big margins after increasing the number of servers. FLTrust shows moderate runtime overheads under encryption and remains efficient even with 100 clients.

% that's all folks
\end{document}